\documentclass[12pt]{article}
\usepackage{geometry}
\usepackage{longtable}
\usepackage{booktabs}
\usepackage{multirow}
\usepackage{setspace}

\usepackage[authoryear,round,colon,sort&compress]{natbib}
\bibpunct{(}{)}{,}{autheryear}{,}{,}

\usepackage[english]{babel}
\usepackage{amsmath, amsthm, amssymb, amscd, amsfonts, bm, bbm, mathrsfs}   
\usepackage{graphicx}
\usepackage{enumerate}
\usepackage[shortlabels]{enumitem}
\usepackage{url} 
\usepackage[colorlinks]{hyperref}
\usepackage{xr-hyper}
\usepackage[capitalize,nameinlink,noabbrev]{cleveref} 
\usepackage[algo2e,linesnumbered,lined,ruled]{algorithm2e}

\usepackage[bottom]{footmisc}
\usepackage[normalem]{ulem}
\usepackage[dvipsnames]{xcolor}
\hypersetup{
	breaklinks,
	linkcolor=blue,
	urlcolor=blue,
	anchorcolor=blue,
	citecolor=blue
}

\usepackage{rotating}

\makeatletter

\theoremstyle{plain}

\@ifundefined{date}{}{\date{}}
\usepackage{babel}

\allowdisplaybreaks

\usepackage{tikz}
\usetikzlibrary{fit,positioning}

\allowdisplaybreaks

\newcommand*{\indep}{%
  \mathbin{%
    \mathpalette{\@indep}{}%
  }%
}
\newcommand*{\nindep}{%
  \mathbin{
    \mathpalette{\@indep}{\not}
  }%
}
\newcommand*{\@indep}[2]{%
  \sbox0{$#1\perp\m@th$}
  \sbox2{$#1=$}
  \sbox4{$#1\vcenter{}$}
  \rlap{\copy0}
  \dimen@=\dimexpr\ht2-\ht4-.2pt\relax
  \kern\dimen@
  {#2}%
  \kern\dimen@
  \copy0 
}

\newtheorem{thm}{Theorem}

\theoremstyle{definition}
\newtheorem{rem}{Remark}

\newtheorem{defn}{Definition}

\makeatletter
\newcommand*{\rom}[1]{\expandafter\@slowromancap\romannumeral #1@}
\newcommand{\HUGE}{\@setfontsize\Huge{40}{50}}   
\makeatother

\makeatletter
\newcommand{\labelcustom}[2]{%
	\protected@write \@auxout {}{\string \newlabel {#1}{{#2}{\thepage}{#2}{#1}{}} }%
	\hypertarget{#1}{#2}
}
\makeatother

\makeatletter
\newcommand{\labeltext}[3][]{%
	\@bsphack%
	\csname phantomsection\endcsname
	\def\tst{#1}%
	\def\refmarkup{}%
	\ifx\tst\empty\def\@currentlabel{\refmarkup{#2}}{\label{#3}}%
	\else\def\@currentlabel{\refmarkup{#1}}{\label{#3}}\fi%
	\@esphack%
	\labelmarkup{#2}
}
\makeatother

\makeatletter
\newcommand{\bianca}{\renewcommand\NAT@open{[}\renewcommand\NAT@close{]}}
\makeatother

\newcommand{\pr}{\mathsf{P}}
\newcommand{\eo}{\mathsf{E}}
\newcommand{\cov}{\mathsf{cov}}
\newcommand{\var}{\mathsf{var}}

\newcommand{\nd}{\mathsf{N}}

\newcommand{\ap}{\alpha} 
\newcommand{\g}{\gamma} 
\newcommand{\ga}{\Gamma} 
\newcommand{\dt}{\delta} 
\newcommand{\e}{\varepsilon} 
\newcommand{\Oa}{\Omega} 
\newcommand{\s}{\sigma} 
\newcommand{\ld}{\lambda} 
\newcommand{\Ld}{\Lambda} 

\newcommand{\HH}{\mathbb{H}} 
\newcommand{\I}{\mathbb{I}} 
\newcommand{\R}{\mathbb{R}} 
\newcommand{\SSS}{\mathbb{S}} 
\newcommand{\W}{\mathbb{W}} 
\newcommand{\X}{\mathbb{X}} 

\newcommand{\aaa}{\mathcal{A}}	
\newcommand{\cc}{\mathcal{C}}	
\newcommand{\hh}{\mathcal{H}}	
\newcommand{\ii}{\mathcal{I}}	
\newcommand{\nn}{\mathcal{N}}	
\newcommand{\rr}{\mathcal{R}}	
\newcommand{\ttt}{\mathcal{T}}	
\newcommand{\uu}{\mathcal{U}}	
\newcommand{\vv}{\mathcal{V}}	
\newcommand{\xx}{\mathcal{X}}	
\newcommand{\yy}{\mathcal{Y}}	

\let\oldnl\nl
\newcommand{\nlnonumber}{\renewcommand{\nl}{\let\nl\oldnl}}
\newcommand{\KwTune}{\KwSty{Parameters:}}

\renewcommand{\eqref}[1]{(\ref{#1})}

\begin{document}

	\renewcommand{\sectionautorefname}{Section}
	\renewcommand{\subsectionautorefname}{Section}
	\renewcommand{\subsubsectionautorefname}{Section}
	\renewcommand{\algorithmautorefname}{Algorithm}

\title{
	Conditional regularized halfspace depth \\
	for sparse functional data 
	and its applications
}
\author{
	Hyemin Yeon\thanks{
			Department of Mathematical Sciences, Kent State University, Kent, OH, 44242, USA, Email: hyeon1@kent.edu.
		}, 
            Xiongtao Dai\thanks{
			Division of Biostatistics, University of California, Berkeley, CA, 94720, USA, Email: xiongtao.dai@hotmail.com.
		}, and
		Sara Lopez-Pintado\thanks{
			Department of Health Sciences, Northeastern University, Boston, MA, 02115, USA, Email: s.lopez-pintado@northeastern.edu.
		}
}

\maketitle
\begin{abstract}
	\noindent
	Many functional datasets are observed sparsely and irregularly.
	Ordering such data is challenging because only limited information is available from each observation, while the underlying trajectories remain infinite-dimensional. 
	This paper develops a novel depth notion for sparse functional data, called the \emph{conditional regularized halfspace depth} (CRHD). 
	CRHD is defined as the infimum of conditional halfspace probabilities of the underlying trajectory given the observed sparse measurements, 
	thereby enabling depth evaluation directly at sparse observations without requiring trajectory reconstruction. 
	We study several basic theoretical properties of CRHD that clarify its behavior as a depth measure. 
	The proposed depth is applicable even to extremely sparsely observed functional data, overcoming key limitations of existing sparse functional depths that often rely on reconstructed curves.
	In addition, CRHD induces meaningful rankings for complex functional data. 
	Its numerical performance is demonstrated through rank-based tests, and its practical utility is illustrated using an infant growth dataset.
	
\vspace{0.1in}
\noindent
\textit{Keywords and phrases:} 
conditional distributions;
depth-based rank tests;
functional rankings; 
sparse functional data analysis.

\end{abstract}



\clearpage
\section{Introduction}


\subsection{Data depth for dense functional data}

Functional data have become increasingly prevalent in contemporary statistics, with a rich history spanning more than three decades.
Each data unit is a function, such as a curve or trajectory, defined over a compact interval, and the main complication arises from the infinite dimensionality of the underlying function space.
Numerous methodologies have been proposed to analyze functional data, beginning with the pioneering works of \cite{BR86, RD91, RS91, Silver95, Silver96}; functional principal component analysis (FPCA), a central tool in this field, has even earlier roots in \cite{Kleffe73, CLS86}.
Several textbooks summarize the broad applicability of functional data approaches, ranging from classical methods \citep{SR02, RS05} to more recent advances \citep{HK12, KR17, CGLC24}, while also providing comprehensive overviews of key theoretical results in functional data analysis (FDA) \citep{HE15}.

Constructing a reasonable and rigorous ordering structure for functional data is an important problem, particularly for extending rank-based statistical methods to FDA.
A notion of data depth assigns a non-negative real value to each observation in the function space, measuring how central or representative the observation is relative to a distribution or sample.
Larger depth values correspond to more central observations, and functional observations can consequently be ranked from the center outward according to their depths.
When functional data are fully observed, hereafter referred to as \emph{dense functional data}, the first functional depth was introduced by \cite{FM01}.
Since then, many depth notions for dense functional data have been developed, including those of \cite{CN08, LR09, LR11, LSLG14, CC14b, CHSV14, NN16, NGH17, Nagy21}.
Some of these depths can be viewed as non-trivial extensions of finite-dimensional data depth notions in Euclidean spaces, such as the halfspace depth of \cite{tukey75}.

Once a depth-based ranking structure for functional data is available, a wide range of statistical applications becomes possible.
These include location estimation through medians or trimmed means \citep{FM01, CHSV14}, rank-based hypothesis testing \citep{LR09,CC15,GKN24,RC24}, classification \citep{Nagy21, HGON15}, and outlier detection \citep{SG11, HRS15, NN16, NGH17}, among many others.

\subsection{Data depth for sparse functional data}

When functional data are sparsely and irregularly observed and contaminated by noise, constructing a meaningful ranking becomes substantially more challenging.
The difficulty arises not only from the infinite-dimensional nature of the underlying trajectories, but also from the limited information available at sparse observation times and the additional uncertainty introduced by measurement error.
Such data, commonly referred to as \emph{sparse functional data}, frequently arise in longitudinal studies; see, for example, \cite{YMW05a, YMW05b, Muller05, DMT18, GDM21}.
Although many statistical problems for sparse functional data have been extensively studied, including FPCA \citep{YMW05a}, regression \citep{YMW05b}, mean and covariance estimation \citep{LH10, ZW16}, derivative estimation \citep{DMT18}, classification \citep{Muller05, WL13}, and clustering \citep{JS03}, data depth for sparse functional data remains relatively underexplored in the functional depth literature.

A common approach to assigning depth values to sparse functional data is based on a two-stage procedure.
One type of two-stage approach first extracts functional principal component (FPC) scores from sparsely observed data and then applies classical finite-dimensional depth methods, such as halfspace, simplicial, or projection depths \citep{ZS00}, to a finite number of estimated FPC scores.
Another type first reconstructs or smooths the sparse observations and then evaluates existing dense functional depths, such as the modified band depth \citep{LR09}, integrated depth \citep{CHSV14, NGH17}, or extremal depth \citep{NN16}, at the predicted smooth curves.
The latter strategy has been discussed, in particular, by \cite{LW11, LQ21}.
Reconstruction of sparse functional data typically relies on FPCA-based methods, such as Principal Analysis by Conditional Estimation (PACE) \citep{YMW05a, YMW05b, Muller05, DMT18, GDM21, fdapace}, or on FPCA under a mixed-model framework \citep{GGC13,refund}; see also \cite{CY11, Kraus15, KL20, LW22, ZM25} for other reconstruction approaches.

Although such two-stage methods can provide useful ordering structures, they do not directly incorporate the uncertainty induced by sparse and noisy observations when computing depth values.
Consequently, they may be less effective for depth-based statistical analyses of sparse functional data, because the final depth calculation relies only on predicted curves or estimated FPC scores.
This limitation is particularly relevant when the purpose of the analysis is to recover a reliable ranking of the underlying trajectories, since reconstruction error or score-estimation error can distort the final ordering.
Similar limitations of two-stage approaches based on reconstructed data have been discussed in other functional data contexts, including derivative estimation \citep{LM09, DMT18}, hypothesis testing \citep{ZYW24, ZW23}, and classification \citep{DH12}.

To our knowledge, the first attempt to address this issue in the sparse functional depth literature is the modified band depth under uncertainty, denoted by $\text{MBD}_U$, proposed by \cite{SL21}.
This method averages the modified band depth \citep{LR09} over three functions: the reconstructed curve and the upper and lower bounds of its confidence band, constructed using the FPCA method of \cite{GGC13}.
This idea was further extended to multivariate functional data by \cite{QG22}.
Although these methods incorporate uncertainty through confidence bands, they still rely heavily on a two-stage procedure and therefore may not fully exploit the information contained in the sparse observation scheme itself.
Another recent proposal by \cite{Elias23} introduces the partially observed integrated functional depth (POIFD), a modification of the classical integrated depth designed to avoid reconstruction when computing depth; this depth was further extended in \cite{QDG22}.
While POIFD represents an important advancement, it may still inherit limitations from its dense-data counterpart when applied to complex functional data.
These considerations motivate the development of a depth notion that can be evaluated directly at sparse functional observations, while retaining the geometric appeal of halfspace-type depths and addressing the infinite-dimensional nature of functional data.

\subsection{Proposed approach and contributions}

The main contribution of this paper is methodological.
We propose a novel depth measure for sparse functional observations, called the \emph{conditional regularized halfspace depth} (CRHD), designed to address the challenges posed by both infinite dimensionality and sparse observation schemes.
Specifically, CRHD is defined as the infimum of the \emph{conditional} halfspace probabilities of the underlying true trajectory given the sparse observation at which the depth is evaluated.
This construction enables depth evaluation directly at sparse functional observations, without requiring reconstruction of the full trajectory.
The proposed depth is formulated within a Hilbert-space framework for functional data, and its central idea extends the conditional-expectation principle underlying the well-known PACE methodology to the problem of depth evaluation.
To the best of our knowledge, these features have not been simultaneously achieved by existing approaches to depth for sparse functional data.

The proposed CRHD differs fundamentally from the regularized halfspace depth (RHD) of \cite{YDL25RHD}, which was developed for densely observed functional data.
First, the original RHD cannot be directly applied to sparse functional observations; it requires prior reconstruction of the underlying trajectories, which can reduce the accuracy of subsequent statistical analyses.
Second, from a methodological perspective, CRHD replaces the empirical counting of observations within halfspaces used in RHD with the computation of conditional distributions based on estimated model parameters, making its implementation substantially more challenging (cf.~\autoref{ssec_2_3}).
This distinction is particularly evident in our numerical studies, where applying RHD to predicted curves fails to preserve the RHD-induced rankings of the true underlying functions.
Finally, the theoretical analysis of CRHD requires a more delicate treatment, as its definition is based on conditional probabilities rather than empirical halfspace probabilities (cf.~\autoref{ssec_2_2}).

Several steps are required to compute CRHD in practice and to address the combined challenges of sparse observation schemes and infinite dimensionality.
First, to make the estimation of the relevant conditional distributions tractable, we adopt a Gaussian working model.
Importantly, this assumption is used only to estimate the conditional distributions, not to define CRHD itself, and our numerical studies indicate that the resulting procedure remains stable under non-Gaussian data-generating mechanisms (cf.~\autoref{sec4}).
Second, we estimate the model parameters---the mean function, covariance function, and noise variance---using PACE \citep{YMW05a}, although other estimation methods could also be incorporated \citep[e.g.,][]{refund, Kraus15, KL20, LW22, ZM25}.
Finally, because CRHD is defined through an infimum over projection directions in an infinite-dimensional space, we approximate this infimum using a random projection approach.
Further discussion of the practical computation is provided in \autoref{secConclusion}.


We demonstrate the practical value of CRHD through rank-based tests.
Our main empirical comparison focuses on the proposed CRHD and existing two-stage approaches, since the other depth notions discussed above are not readily implementable in the sparse functional setting for various reasons; see \autoref{remCompare} in \autoref{sec4} for details.
The simulation studies show that direct depth evaluation via CRHD consistently outperforms, or remains competitive with, two-stage approaches based on trajectory reconstruction.
CRHD is particularly effective in settings where the analysis is sensitive to extremeness.
This advantage arises from its ability to account for sparse observation schemes directly, including for extreme observations, which two-stage approaches may fail to recover accurately.

We further apply the CRHD-based rank test to the infant growth dataset, including a modified version in which the two groups of growth curves are made to overlap more substantially than in the original data.
This modification emphasizes differences in curve shape rather than overall magnitude.
The CRHD-based rank test proves useful in such challenging sparse functional data settings, where existing methods may have limited power.
For example, our method detects group differences in both the original and modified datasets, whereas the method used in the existing analysis of \cite{LQ21} detects the difference only in the original dataset.
This data example illustrates the practical utility of CRHD for both relatively well-separated and more complex sparse functional data.

\subsection{Outline}

The rest of the paper is outlined as follows.
In \autoref{sec2}, after a brief overview of the original RHD for dense functional data, 
we formally define the proposed CRHD, provide its theoretical properties, and introduce its sample counterparts. 
\autoref{sec3} then devotes a detailed description of its practical computation. 
In \autoref{sec4}, we provide comprehensive numerical studies,
with a main emphasis on contrasting our approach with the two-stage methods. 
The proposed CRHD is illustrated by using a growth dataset in \autoref{sec5}.
The proposed methods are implemented by R function \texttt{CRHD}, 
which is appended to the R package \texttt{RHD} available via the author's Github repository: \texttt{https://github.com/luckyhm1928/RHD}.
Finally, the supplementary document includes further theoretical and numerical details.


\section{Conditional regularized halfspace depth}
\label{sec2}

After \autoref{ssec_2_1} reviews the RHD by \cite{YDL25RHD},
we define the proposed depth, which extend the regularized halfspace depth to sparse functional data, and explore its theoretical properties in \autoref{ssec_2_2}.
\autoref{ssec_2_3} then describe its estimation procedure.

\subsection{Regularized halfspace depth for dense functional data} \label{ssec_2_1}


The classical halfspace depth \citep{tukey75}, when directly applied to infinite-dimensional data, may encounter the degeneracy \citep{DGC11}:
namely, depth values are zero almost everywhere. 
The RHD overcomes this degeneracy issue
by restricting the projection direction set in the halfspace depth to directions with RKHS norms less than a pre-specified positive number $\ld \in (0,\infty)$. 
In addition, the RHD is shown to be practically beneficial as it effectively identifies shape features of functional data.
In this section, we will provide a brief overview of the regularized halfspace depth 
when functional data are fully observed without contamination. 

Let $\HH$ be an infinite-dimensional separable Hilbert space equipped with inner product $\langle \cdot, \cdot \rangle$ and the induced norm $\|\cdot\| = \sqrt{\langle \cdot, \cdot \rangle}$, and $X$ be a random function that takes values in $\HH$
with mean element $\mu \equiv \eo[X] \in \HH$.
We suppose the finite second moment of $X$ as $\eo[\|X\|^2]<\infty$,
and define the covariance operator of $X$ by $\ga = \var[X] \equiv \eo[(X - \mu) \otimes (X-\mu)]$.
Here, we define the bounded linear operator $x \otimes y:\HH \to \HH$, which is called the tensor product of two elements $x,y \in \HH$, by $(x \otimes y)(z) = \langle z, x \rangle y$ for each $z \in \HH$.
Since $\ga$ is self-adjoint, non-negative definite, and compact, 
by spectral decomposition in general Hilbert space,
the covariance operator admits the eigen-decomposition as $\ga = \sum_{k=1}^\infty \g_k (\phi_k \otimes \phi_k)$ \citep{mac09, HE15}.
Here, $\g_k$ and $\phi_k$ respectively denote the $k$-th eigenvalue and eigenfunction of $\ga$,
where the eigenvalues $\{\g_k\}_{k=1}^\infty$ satisfy $\g_1 \geq \g_2 \geq \cdots \geq 0$ with $\g_k \to 0$ as $k\to\infty$ and the eigenfunctions $\{\phi_k\}_{k=1}^\infty$ form a complete orthonormal system for the closure of the image of $\ga$. 
We proceed without loss of generality under the assumption that $\ker \ga  =\{0\}$, i.e., $\ga v=0$ implies $v =0$, and that the eigenvalues are positive and strictly decreasing, i.e., $\g_1>\g_2>\cdots>0$
\citep{CMS07, HH07}. 
The former ensures that the underlying function space $\HH$ is visible through the random function $X$,
while the latter assumption helps us to exclude trivial cases, for example, when the random element $X$ is supported only in a finite-dimensional subspace of $\HH$.
 
We can define a square-root inverse operator by using the spectral decomposition  as $\ga^{-1/2} \equiv \sum_{k=1}^\infty \g_k^{-1/2} (\phi_k \otimes \phi_k)$. 
Then, when $\HH$ is the space $L^2([0,1])$ of square integrable functions from $[0,1]$ to $\R$, 
where the inner product is defined by integration $\langle f, g \rangle = \int_0^1 f(t)g(t)dt$ for $f,g \in \HH$,
the set $\HH(\ga) \equiv \{ v \in \HH: \|\ga^{-1/2} v\| <\infty \}$ is an RKHS with reproducing kernel $(s,t) \mapsto \cov[X(s), X(t)]$
if the covariance function $\ga$ is continuous \citep{wahba73}. 
The subset $\HH(\ga)$ is relatively large in $\HH$ in the sense that $\HH(\ga)$ is an infinite-dimensional dense subspace of $\HH$. 
Inspired by this, \cite{YDL25RHD} suggested restricting the directions in the Tukey's halfspace depth to have the RKHS norm $\|v\|_{\HH(\ga) } \equiv \|\ga^{-1/2} v\|$ less than the regularization $\ld \in (0,\infty)$ and proposed the RHD for dense functional data as defined next.

\begin{defn}
	The \emph{regularized halfspace depth (RHD)} of a (fully observed) functional element $x \in \HH$ is defined as 
	\begin{align} \label{eqRHDdense}
		D^{\mathrm{dense}}_\ld(x) \equiv \inf_{v \in \vv_\ld} \pr( \langle X - x, v \rangle \geq 0 ),
	\end{align}
	where $\ld \in (0, \infty)$ is a regularization and the direction set $\vv_\ld$ is defined by
	\begin{align} \label{eqProjDir}
		\vv_\ld \equiv \{ v \in \HH: \|v\|=1, \|\ga^{-1/2}v\| \leq \ld \}. 
	\end{align}
\end{defn}

The role of the restricted direction set $\vv_\ld$ is to regularize the original Tukey halfspace depth in the infinite-dimensional Hilbert space $\HH$. 
Indeed, if one takes the infimum over the full unit sphere $\vv \equiv \{v\in\HH:\|v\|=1\}$, the resulting halfspace depth can degenerate in infinite-dimensional space. 
By contrast, the RHD restricts the admissible directions to $\vv_\ld$ in \eqref{eqProjDir}, 
thereby controlling the influence of higher-order eigenfunctions when computing the infimum of halfspace probabilities. 
Under mild assumptions, this regularization prevents degeneracy while preserving a rich collection of projection directions. 
In particular, $\vv_\ld$ can still contain directions involving infinitely many eigenfunctions, 
although the contribution of higher-order components is controlled. 
Thus, the RHD is able to incorporate information beyond the leading eigenfunctions. 
Moreover, increasing $\ld$ enlarges the admissible direction set and can make the RHD more sensitive to shape variation in functional data,
which is particularly useful in applications such as shape outlier detection. 
We refer to \cite{YDL25RHD} for further discussion of the theoretical and practical benefits of the RHD in \eqref{eqRHDdense}.

In what follows, we assume that $\ld>\g_1^{-1/2}$ is fixed to guarantee the non-emptiness of the direction set $\vv_\ld$ in \eqref{eqProjDir}.


\subsection{Depth definition and basic properties of CRHD} \label{ssec_2_2}

We now extend the RHD in \eqref{eqRHDdense}, originally developed for densely observed functional data, to the setting where depth is evaluated at sparsely and irregularly observed functional observations. 
The key idea is to replace the halfspace probability in the original RHD with a \emph{conditional} halfspace probability, 
conditioning on the sparse observations at which the depth is evaluated. 
This construction assigns a depth value directly to the observed sparse functional values, rather than to a reconstructed version of the underlying function. 
Consequently, unlike reconstruction-based approaches or methods that require a certain degree of observability over the domain,
the proposed CRHD can be applied directly to sparse and irregular functional observations without imposing such additional constraints.

To explain, we restrict our attention to case of $\HH \equiv L^2([0,1])$.
The mean function is then given as $\mu(t) = \eo[X(t)]$ for $t \in [0,1]$. 
To simplify notation, we use the symbol $\ga$ for both covariance operator and function. 
Namely, the covariance function of $X$ is a bivariate function on $[0,1]^2$ defined as $\ga(s,t) = \cov[X(s), X(t)]$ for each $s,t \in [0,1]$. 
The covariance operator and function are then related as
$(\ga f)(t) = \int \ga(t,s) f(s)  ds$ for $t \in [0,1]$ and $f \in \HH$. 

We consider practical situations where functional data are sparsely observed at a few random times with additive measurement errors \citep{YMW05a, YMW05b, Yao07, LH10, ZW16, DMT18}.
To formally describe the CRHD evaluation at a sparse functional datum,
let $X_0$ be an independent copy of the underlying function $X$.
Since we may not be able to fully observe $X_0$, we suppose that noise-contaminated values $\{\tilde{X}_{0j}\}_{j=1}^{n_0}$ of $X_0$ are observed at $n_0$ time measurements $\{T_{0j}\}_{j=1}^{n_0}$ as follows: 
\begin{align*}
	\tilde{X}_{0j} = X_0(T_{0j}) + \e_{0j}, \quad j=1,\dots,n_0.
\end{align*}
Here, we assume 
the random times $\{T_{0j}\}_{j=1}^{n_0}$ follow the distribution of a random variable $T$ on [0,1] with continuous distribution function $F_T$,
the additive random errors $\{\e_{0j}\}_{j=1}^{n_0}$ are iid copies of a random variable $\e$ with mean zero and variance $\s_\e^2$, and
the error $\{\e_{0j}\}_{j=1}^{n_0}$ are independent of $X_0$.
That is, we observe the pairs $\{(T_{0j}, \tilde{X}_{0j})\}_{j=1}^{n_0}$,
which are identically distributed as $(T, \tilde{X})$ where $\tilde{X} = X(T) + \e$. 
Then, the pair $\bm{\xx}_0 = (\bm{T}_0, \bm{X}_0)$ denotes the sparse functional datum observed from the true function $X_0$,
where $\bm{T}_0 = [T_{01}, \dots, T_{0n_0}]^\top$ and $\bm{X}_0 = [\tilde{X}_{01}, \dots, \tilde{X}_{0n_0}]^\top$  are the vectors of measurement times and of the observed data, respectively.
In this setup, 
by using the conditional distribution of the underlying true function $X_0$  given the corresponding sparse observation $\bm{\xx}_0$,
we propose a novel depth evaluated at $\bm{\xx}_0$ as follows.

\begin{defn}
	We define the \emph{conditional regularized halfspace depth (CRHD)} $D_\ld(\bm{\xx}_0)$ of sparse functional datum $\bm{\xx}_0$ by
	\begin{align} 
		D_\ld(\bm{\xx}_0)
		& \equiv \inf_{v \in \vv_\ld} \pr ( \langle X - X_0, v \rangle \geq 0 | \bm{\xx}_0), \label{eqSRHD1} 
	\end{align}
	where $\ld \in (0,\infty)$ is a regularization and the direction set $\vv_\ld$ is given in \eqref{eqProjDir}.
\end{defn}

It is important to note that, although the direction set $\vv_\ld$ remains rich enough to contain directions formed by infinite linear combinations of eigenfunctions, 
its regularization constraint controls the contribution of higher-order eigenfunctions when computing the depth in \eqref{eqSRHD1}. 
This control prevents the infimum of the conditional halfspace probabilities from collapsing to zero,
thereby leading to the nondegeneracy of the CRHD in \eqref{eqSRHD1} under mild conditions. 
The precise statement is given below, and its proof is deferred to the supplement.

\begin{thm}[Non-degeneracy of the CRHD] \label{thmNondege}

    Suppose that the following distributional condition on $X$ holds: 
    \begin{align} \label{condNondege}
        \sup_{v \in \vv_\ld} \pr \left( \langle X-\mu, v \rangle / \|\ga^{1/2} v\| \leq t \right) <1, \quad \forall t \in (0,\infty).
    \end{align}
    Then, for any sparse functional datum $\bm{\xx}_0$, we have $D_\ld(\bm{\xx}_0)>0$ almost surely.
\end{thm}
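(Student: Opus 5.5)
The plan is to exploit the independence of $X$ from the pair $(X_0,\bm{\xx}_0)$. This lets us write each conditional halfspace probability as a conditional expectation of a deterministic function of $X_0$. We then bound that function from below uniformly in $v\in\vv_\ld$ by a strictly positive function of $\|X_0-\mu\|$ alone.

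\emph{Step 1 (representation).} Since $X_0$ takes values in the Polish space $\HH$, a regular conditional distribution $Q(\cdot\,|\,\bm{\xx}_0)$ of $X_0$ given $\bm{\xx}_0$ exists. For $v\in\HH$ and $s\in\R$, set $G_v(s)\equiv \pr(\langle X-\mu,v\rangle \geq s)$. Because $X$ is independent of $(X_0,\bm{\xx}_0)$, the function
\begin{align*}
\bm{\xx}_0 \mapsto \int G_v(\langle x_0-\mu, v\rangle)\, Q(dx_0\,|\,\bm{\xx}_0)
\end{align*}
is a version of $\pr(\langle X-X_0,v\rangle\geq 0\,|\,\bm{\xx}_0)$ for every $v$. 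We interpret \eqref{eqSRHD1} with this version, chosen simultaneously for all $v$. This is the step I expect to be the main delicate point. The infimum runs over uncountably many $v$, so one cannot simply pick a separate version for each $v$ and discard a null set each time. Fixing the regular conditional distribution once avoids this: all subsequent inequalities then hold for every $v$ and every realization.

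\emph{Step 2 (uniform lower bound).} Let $v\in\vv_\ld$. By Cauchy--Schwarz,
\begin{align*}
1=\langle \ga^{1/2}v,\ga^{-1/2}v\rangle\leq \|\ga^{1/2}v\|\,\ld,
\end{align*}
so $\|\ga^{1/2}v\|\geq 1/\ld$. Writing $W_v\equiv\langle X-\mu,v\rangle/\|\ga^{1/2}v\|$, we obtain
\begin{align*}
G_v(\langle x_0-\mu,v\rangle) = \pr\bigl(W_v\geq \langle x_0-\mu,v\rangle/\|\ga^{1/2}v\|\bigr)\geq \pr\bigl(W_v> \ld\|x_0-\mu\|+1\bigr).
\end{align*}
Here we used $|\langle x_0-\mu,v\rangle|\leq\|x_0-\mu\|$, since $\|v\|=1$. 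Define
\begin{align*}
h(t)\equiv 1-\sup_{v\in\vv_\ld}\pr(W_v\leq t).
\end{align*}
The right-hand side above is at least $h(\ld\|x_0-\mu\|+1)$. By \eqref{condNondege}, $h(t)>0$ for all $t>0$. Moreover, $h$ is nonincreasing, hence Borel measurable. Consequently,
\begin{align*}
D_\ld(\bm{\xx}_0)\geq \int h(\ld\|x_0-\mu\|+1)\,Q(dx_0\,|\,\bm{\xx}_0)=\eo\bigl[h(\ld\|X_0-\mu\|+1)\,\big|\,\bm{\xx}_0\bigr].
\end{align*}

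\emph{Step 3 (positivity).} Since $\eo\|X_0\|^2<\infty$, the variable $H\equiv h(\ld\|X_0-\mu\|+1)$ is strictly positive almost surely. Let $A\equiv\{\eo[H\,|\,\bm{\xx}_0]=0\}$, which is $\sigma(\bm{\xx}_0)$-measurable. Then $\eo[H\mathbbm{1}_A]=\eo\bigl[\eo[H\,|\,\bm{\xx}_0]\mathbbm{1}_A\bigr]=0$. Since $H>0$ a.s., this forces $\pr(A)=0$, and hence $D_\ld(\bm{\xx}_0)>0$ almost surely.

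The only other minor point is measurability of $D_\ld(\bm{\xx}_0)$ itself. It is not needed for the conclusion, because we have exhibited a measurable positive lower bound.
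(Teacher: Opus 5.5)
Your proof is correct and follows essentially the same route as the paper's. Cauchy--Schwarz gives $\|\ga^{1/2}v\|\geq 1/\ld$; the halfspace probability given $X_0$ is then bounded below, uniformly over $\vv_\ld$, by $1-\sup_{v\in\vv_\ld}\pr(W_v\leq t)$ at a level $t$ determined by $\|X_0-\mu\|$; finally you take the conditional expectation given $\bm{\xx}_0$ and invoke \eqref{condNondege}. Your additions tidy up points the paper passes over quickly: the regular conditional distribution justifies the interchange $\inf_v \eo[\cdot\,|\,\bm{\xx}_0]\geq \eo[\inf_v \cdot\,|\,\bm{\xx}_0]$ over uncountably many $v$, the shift by $+1$ avoids evaluating the condition at $t=\ld\|X_0-\mu\|=0$, and you add explicit measurability and positivity arguments.
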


The distributional condition in \eqref{condNondege} is relatively mild and indeed includes various random elements.
In particular, examples encompass elliptically-distributed functional data \citep{BB09, BBT14} beyond the Gaussian elements.
Namely, this condition includes random functions whose projections follow well-known distributions such as normal, t, logistic, or Laplace distributions, 
which may not have finite higher moments. 

The theoretical properties of data depth, initially discussed by \cite{Liu90} and later rigorously postulated by \cite{ZS00}, include affine invariance, maximality at center, monotonicity relative to the deepest point, and vanishing at infinity. 
However, these properties cannot be directly extended to functional depth due to its infinite dimensionality. 
\cite{NB16,GN17} comprehensively discussed and postulate dense functional versions of depth properties.
The challenge is further compounded by the lack of linear structures for sparse functional data, making it difficult to find their sparse versions.
To overcome this, we borrow the Hilbert space structure in the underlying function space to establish the desirable theoretical properties of the CRHD. 
To our knowledge, such results have not been previously developed for sparse functional depth. 
These properties are provided in \autoref{thmProperties} and their proofs are given in the supplement.

\begin{thm} \label{thmProperties}
	\hfill
	\begin{enumerate}[(a)]
		\item (Isometry invariance)
		Let $A:\HH \to\HH$ be a bounded linear operator and $b \in \HH$. 
		We define
		$\bm{A}\bm{X}_0 + \bm{b} \equiv ((AX)(T_{01})+b(T_{01}), \dots, (AX)(T_{0n}) + b(T_{0n}))^\top$,
		and let $\bm{\yy}_0 \equiv (\bm{T}, \bm{A}\bm{X}_0 + \bm{b})$ denote the sparse functional datum affinely transformed by $A$ and $b$, where $Y_0 \equiv AX_0 + b$ represents the corresponding underlying true function.
		If $A$ is a surjective isometry, 
		then we have $D_\ld(\bm{\yy}_0; P_{AX+b}) = D_{\ld}(\bm{\xx}_0)$ almost surely,
		where $ P_{AX+b}$ denotes the probability distribution of $AX+b$.

		\item 
		(Maximality at center)
		Suppose that $X$ is halfspace symmetric about a unique center $\theta \in \HH$ in the sense that $\pr(X \in H) \geq 1/2$ for each $H \in \hh(\theta)$, 
		where $\hh(x)$ denotes the collection of all closed halfspaces containing $x \in \HH$. 
		If $\ld>\g_1^{-1/2}$ where $\g_1$ is the largest eigenvalue of $\ga$, 
		then for any sparse functional datum $\bm{\xx}_0$, we have $D_\ld(\bm{\xx}_0) \leq D^{\mathrm{dense}}_\ld(\theta)$ almost surely. 
		
		\item (Monotonicity relative to the deepest point)
		Let $\theta \in \HH$ be such that 
		$D^{\mathrm{dense}}_\ld(\theta) \geq D_\ld(\bm{\xx}_0)$ almost surely for any sparse functional datum $\bm{\xx}_0$. 
		Then, for any sparse functional datum $\bm{\xx}_0$ and $\ap \in [0,1]$, it holds that $D_\ld(\bm{\xx}_0) \leq D_\ld(\bm{\theta} + \ap(\bm{\xx}_0 - \bm{\theta}))$ almost surely,
		where $\bm{\theta} = (\theta(T_{01}), \dots, \theta(T_{0n_0}))^\top \in \R^{n_0}$. 
		Here, the addition and scalar multiplication of sparse functional data are defined as in part (a). 
		
		\item (Vanishing at infinity)
		Let $\W$ be any finite-dimensional subspace of $\HH$, 
		where $\Pi_\W$ denote the projection operator onto $\W$. 
		Also, let $\{\bm{\xx}_{0n}\}$ be a sequence of sparse functional data
		and suppose that the corresponding true functions $X_{0n}$
		satisfy
		 $\|\Pi_\W X_{0n}\| \to \infty$ as $n\to\infty$ almost surely.
		Then, we have $D_\ld(\bm{\xx}_{0n}) \to 0$ as $n\to\infty$ almost surely. 
		
	\end{enumerate}
\end{thm}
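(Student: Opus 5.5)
We prove the four parts separately. Each rests on the same device: write the conditional halfspace probability in \eqref{eqSRHD1} as an average over the conditional law of $X_0$ given $\bm{\xx}_0$, and use the independence of $X$ from $(X_0,\bm{\xx}_0)$. Define $h_v(x) \equiv \pr(\langle X - x, v\rangle \ge 0)$, the halfspace probability appearing in \eqref{eqRHDdense}. Since $X$ is independent of $(X_0,\bm{\xx}_0)$, we have almost surely
\begin{align*}
	\pr(\langle X-X_0,v\rangle\ge 0 \mid \bm{\xx}_0) = \eo[h_v(X_0)\mid \bm{\xx}_0].
\end{align*}
Consequently $D_\ld(\bm{\xx}_0)=\inf_{v\in\vv_\ld}\eo[h_v(X_0)\mid\bm{\xx}_0]$. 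A measurability point needs care here: the infimum runs over uncountably many $v$, and each conditional expectation is defined only up to a null set. We handle this by using separability of $\HH$ to restrict to a countable dense subset of $\vv_\ld$, or alternatively by working with a regular conditional distribution of $X_0$ given $\bm{\xx}_0$, which exists because $\HH$ is Polish.

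\emph{(a)} Let $Y=AX+b$ and $Y_0=AX_0+b$. Because $A$ is a surjective isometry, it is unitary and $A^*=A^{-1}$. The covariance of $Y$ is therefore $A\ga A^*$, with $(A\ga A^*)^{-1/2}=A\ga^{-1/2}A^*$. It follows that $v\mapsto A^*v$ maps the direction set $\vv_\ld$ of $P_{AX+b}$ bijectively onto $\vv_\ld$. We also have $\langle Y-Y_0,v\rangle=\langle X-X_0,A^*v\rangle$. It remains to show that conditioning on $\bm{\yy}_0$ gives the same conditional law as conditioning on $\bm{\xx}_0$. We would take the transformation to act through the underlying function, i.e.\ $\bm{\yy}_0$ consists of $\bm T_0$ together with the values $(Y_0(T_{0j})+\e_{0j})_j$. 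Since $A$ is invertible, we then argue that the information generated by $\bm{\yy}_0$ coincides with that generated by $\bm{\xx}_0$, as in the paper's setup. I expect this identification of the conditioning $\sigma$-fields to be the delicate point of part (a), and I would state the interpretation explicitly.

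\emph{(b)} Fix any $v\in\vv_\ld$. By the averaging representation, $D_\ld(\bm{\xx}_0)\le \eo[h_v(X_0)\mid\bm{\xx}_0]$, and we use symmetry to bound the right side. Fix $x$ and suppose $\langle x-\theta,v\rangle\ge0$. Then the closed halfspace $\{z:\langle z-x,v\rangle\ge0\}$ is contained in $\{z:\langle z-\theta,v\rangle\ge0\}$, which gives $h_v(x)\le h_v(\theta)$. If instead $\langle x-\theta,v\rangle<0$, we replace $v$ by $-v$, which also lies in $\vv_\ld$; in that case $h_{-v}(x)\le h_{-v}(\theta)$ by the same inclusion. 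Halfspace symmetry gives $h_{\pm v}(\theta)\ge1/2$, so the relevant bound on each side is consistent. To compare with $D^{\mathrm{dense}}_\ld(\theta)$, we choose $v^\star$ to nearly attain the infimum in \eqref{eqRHDdense} at $\theta$. We then bound $D_\ld(\bm{\xx}_0)$ by $\min\{\eo[h_{v^\star}(X_0)\mid\bm{\xx}_0],\ \eo[h_{-v^\star}(X_0)\mid\bm{\xx}_0]\}$. Using $h_{v}(x)+h_{-v}(x)\le 1+\pr(\langle X-x,v\rangle=0)$, the sum of the two conditional terms is at most $1$ plus an atom term. The minimum is therefore at most $1/2$ plus half that atom term. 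Halfspace symmetry with a unique center should make $D^{\mathrm{dense}}_\ld(\theta)\ge 1/2$ in the relevant sense, which closes the bound. The main obstacle of the whole proof is controlling the atom terms. I would treat them by restricting to directions where $\langle X,v\rangle$ has no atom at the relevant point, and then approximating by continuity.

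\emph{(c)} Writing $\bm\theta+\ap(\bm{\xx}_0-\bm\theta)$ corresponds to the underlying function $\theta+\ap(X_0-\theta)$. For any $v\in\vv_\ld$, we split on the sign of $\langle X_0-\theta,v\rangle$. On the event $\{\langle X_0-\theta,v\rangle\ge0\}$, the point $\theta+\ap(X_0-\theta)$ lies between $\theta$ and $X_0$ along $v$. The halfspace inclusion then gives $h_v(\theta+\ap(X_0-\theta))\ge h_v(X_0)$ pointwise on this event. On the complementary event we instead use the bound by $h_v(\theta)\ge D^{\mathrm{dense}}_\ld(\theta)\ge D_\ld(\bm{\xx}_0)$. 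Combining the two events and taking the infimum over $v$ yields the stated monotonicity.

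\emph{(d)} We choose a unit vector $w\in\W$ with $\ga^{-1/2}w$ finite and of norm at most $\ld$; such $w$ exist after replacing $\W$ by a dense approximation inside $\HH(\ga)$. We take $w$ aligned with $\Pi_\W X_{0n}$, which is possible because the unit sphere of $\W$ is compact. Since $\|\Pi_\W X_{0n}\|\to\infty$, this gives $\langle X_{0n},w\rangle\to\infty$. Tightness of $\langle X,w\rangle$, uniformly over $w$ in this compact set, then gives $h_w(X_{0n})\to0$. Finally, dominated convergence for conditional expectations gives $D_\ld(\bm{\xx}_{0n})\le\eo[h_w(X_{0n})\mid\bm{\xx}_{0n}]\to0$ almost surely.
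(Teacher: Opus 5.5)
Your averaging identity $\pr(\langle X-X_0,v\rangle\ge 0\mid\bm{\xx}_0)=\eo[h_v(X_0)\mid\bm{\xx}_0]$ and the direction-set bijection $v\mapsto A^*v$ in (a) agree with the paper. The proposal still has genuine gaps in (a), (c) and (d). The recurring obstruction is one you respect yourself in (b): inside $\eo[\,\cdot\mid\bm{\xx}_0]$ the direction must be deterministic (or $\bm{\xx}_0$-measurable), and you cannot argue case by case on events determined by $X_0$.

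\emph{Part (c).} Your two pointwise bounds amount to $h_v(X_{0,\ap})\ge\min\{h_v(X_0),D_\ld(\bm{\xx}_0)\}$. After conditioning this gives only $\eo[\min\{h_v(X_0),D_\ld(\bm{\xx}_0)\}\mid\bm{\xx}_0]$, which is \emph{at most} $D_\ld(\bm{\xx}_0)$. It is strictly smaller whenever $h_v(X_0)<D_\ld(\bm{\xx}_0)$ with positive conditional probability, which typically happens on $\{\langle X_0-\theta,v\rangle>0\}$. What is missing is a direction-wise bound $\eo[h_v(X_{0,\ap})\mid\bm{\xx}_0]\ge\min\{\eo[h_v(X_0)\mid\bm{\xx}_0],D^{\mathrm{dense}}_\ld(\theta)\}$. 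This does hold in the Gaussian model centred at $\theta$, but it does not follow from halfspace inclusions.

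\emph{Part (d).} The unit vector $w$ ``aligned with $\Pi_\W X_{0n}$'' is a function of $X_{0n}$, so the bound $D_\ld(\bm{\xx}_{0n})\le\eo[h_w(X_{0n})\mid\bm{\xx}_{0n}]$ is not available. For instance, let $X$ be symmetric about $0$ with atomless projections, and suppose that given $\bm{\xx}_{0n}$ we have $X_{0n}=\pm x_n$ with probability $1/2$ each, where $x_n\in\W$ and $\|x_n\|\to\infty$. Then $h_{w(X_{0n})}(X_{0n})\to 0$, while $\eo[h_v(X_{0n})\mid\bm{\xx}_{0n}]=1/2$ for every fixed $v$. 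Separately, compactness of the unit sphere of $\W$ does not place its points in $\vv_\ld$.

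\emph{Part (a).} The point you flag is real. Invertibility of $A$ does not make $\bm{\yy}_0$ and $\bm{\xx}_0$ generate the same $\sigma$-field, because $(AX_0)(T_{0j})$ need not determine $X_0(T_{0j})$ (e.g.\ $(Af)(t)=f(1-t)$). The identification needs $A$ to act pointwise, or it must be imposed as a convention.

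The paper's own arguments make the same moves without comment: its separating $v$ in (c) and its $U_n$ in (d) are chosen per realization of $X_0$, and its (a) simply swaps the two conditionings. Following them will therefore not repair these steps.

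\emph{Part (b).} Here your fixed-$\pm v$ device handles the conditioning correctly. For every $v\in\vv_\ld$ it gives $D_\ld(\bm{\xx}_0)\le\frac12+\frac12\eo[\pr(\langle X,v\rangle=\langle X_0,v\rangle\mid X_0)\mid\bm{\xx}_0]$. No near-optimal $v^\star$ is needed, since halfspace symmetry alone gives $D^{\mathrm{dense}}_\ld(\theta)\ge\frac12$. So your proof is complete whenever some $v\in\vv_\ld$ has $\langle X,v\rangle$ atomless.

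The atom term cannot, however, be removed by restricting to atomless directions and approximating, because every direction may carry atoms that the conditional law of $X_0$ charges. Let $X=\pm x_1$ with probability $p/2$ each, and a centred nondegenerate Gaussian otherwise.
\begin{itemize}
\item $X$ is halfspace symmetric about the unique center $0$, and $D^{\mathrm{dense}}_\ld(0)=\frac12$.
\item Yet $h_v(x_1)+h_v(-x_1)\ge 1+\frac p2$ for every $v$.
\item With Gaussian noise and $\bm{X}_0$ near $\bm 0$, the conditional law of $X_0$ puts mass close to $q/2$ on each of $\pm x_1$, for some $q\in(0,1)$, and the rest on a nearly centred Gaussian.
\item Hence $D_\ld(\bm{\xx}_0)\ge\frac12+\frac{pq}{4}-o(1)>D^{\mathrm{dense}}_\ld(0)$ on an event of positive probability.
\end{itemize}
So (b) requires an additional no-atom hypothesis. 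The paper sidesteps the atom term differently: for each realization $X_0\neq\theta$ it picks $v$ strictly separating $X_0$ from $\theta$, so that $h_v(X_0)\le 1-h_{-v}(\theta)\le\frac12$. But that $v$ depends on $X_0$, which is exactly the step your fixed-direction argument rightly avoids.
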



For deriving the theoretical properties delineated in \autoref{thmProperties},
we leverage the Hilbert space structure of $\HH$.
\autoref{thmProperties}(a) defines the affine transformation $\bm{\yy}_0$ of a sparse functional datum $\xx_0$ as the sparse functional datum corresponding to the affine transformation $Y_0 \equiv AX_0 + b$ of the underlying true function $X_0$. 
This transformation of $X_0$ by a surjective isometry $A$ is indeed affine on $\HH$, as per the Mazur--Ulam theorem \citep{vai03}. 
\autoref{thmProperties}(c) describes a convex combination of two sparse data points in a pointwise manner, akin to the affine transformation in (a). 
The deepest point $\theta \in \HH$ in sparse functional data, as defined in \autoref{thmProperties}(b-c), is based on the original RHD $D_\ld^{\mathrm{dense}}$ for fully observed functional data. 
In \autoref{thmProperties}(d), 
the norm of the underlying true function $X_{0n}$ corresponding to the given sparse functional datum $\bm{\xx}_{0n}$ is employed instead of a norm of $\bm{\xx}_{0n}$.

\subsection{Sample CRHD}  \label{ssec_2_3}


Since the population CRHD  in \eqref{eqSRHD1} is not feasible to compute,
we shift our focus to estimating it using the observed sparse functional data.
Let $X_1, \dots, X_n$ be independently and identically distributed (iid) copies of the underlying function $X$.
The observed sparse functional data $\{\bm{\xx}_i = (\bm{T}_i, \bm{X}_i)\}_{i=1}^n$ then consists of the vectors $\bm{T}_i = [T_{i1}, \dots, T_{in_i}]^\top$ and $\bm{X}_i = [\tilde{X}_{i1}, \dots, \tilde{X}_{in_i}]^\top$ of measurement times and the observed data.
We assume that the random times $T_{ij} \in \ttt$ that follows  $F_T$ and the observed data are generated as
\begin{align} \label{eqObsData}
	\tilde{X}_{ij} = X_i(T_{ij}) + \e_{ij}, \quad j=1,\dots,n_i, i=1, \dots,n,
\end{align}
where the additive errors $\{\e_{ij}\}_{j=1}^{n_i}$ are iid with mean zero and variance $\s_\e^2$, which are independent of $X_i$ for each $i$.
This framework mirrors the setup for the sparse functional datum $\bm{\xx}_0$, where the depth is evaluated.
To employ the sparse observations $\{\bm{\xx}_i\}_{i=1}^n$, 
we introduce an averaged version of the CRHD next.
\begin{defn}
	For $\ld \in (0,\infty)$, the \emph{averaged conditional regularized halfspace depth (ACRHD)} $D_{\mathrm{ave},\ld}(\bm{\xx}_0) $ of sparse functional datum $\bm{\xx}_0$ is defined by
	\begin{align}
		D_{\mathrm{ave},\ld}(\bm{\xx}_0) 
		&
		\equiv \inf_{v \in \vv_\ld} n^{-1} \sum_{i=1}^n \pr ( \langle X_i - X_0, v \rangle \geq 0 | \bm{\xx}_i, \bm{\xx}_0),  \label{eqESRHD}
	\end{align}
	where the direction set $\vv_\ld$ is given in \eqref{eqProjDir}.
\end{defn}

As the conditional distribution plays an important role in our proposed depths, 
its estimation becomes crucial in addition to that of $\vv_\ld$.
As such, to compute the ACRHD in \eqref{eqESRHD}, 
we determine the conditional distribution of the projection $\langle X_i - X_0, v \rangle$ given the sparse functional data $(\bm{\xx}_i,\bm{\xx}_0)$.
We extend the results for the FPC scores, which were exploited by the previous work \citep{YMW05a, YMW05b, GDM21}, to those for general projections, which are stated conditional on all the time points $\{\bm{T}_i\}_{i=0}^n$. 
In particular, we denote the conditional mean and variance of $\bm{X}_i$ given $\bm{T}_i$ by
\begin{align*}
	\bm{\mu}_i & \equiv \eo[\bm{X}_i | \bm{T}_i]  = \left[ \mu(T_{i1}), \dots, \mu(T_{in_i}) \right]^\top, \\
	\bm{\Sigma}_i & \equiv \cov[\bm{X}_i|\bm{T}_i] = \left[ \cov[\tilde{X}_{ij}, \tilde{X}_{ij'}|\bm{T}_i] \right]_{1 \leq j,j' \leq n_i},
\end{align*}
where the $(j,j')$ element in the covariance matrix $\bm{\Sigma}_i$ is given as $\cov[\tilde{X}_{ij}, \tilde{X}_{ij'}|\bm{T}_i] = \s_\e^2 \dt_{jj'} + \ga(T_{ij}, T_{ij'})$.
Here, $\dt_{jj'} = 1$ if $j=j'$ and $\dt_{jj'}=0$ if $j \neq j'$.

Since the marginal mean and variance are given as $\eo[\langle X_i, v \rangle] = \langle \mu, v \rangle$ and $\var[\langle X_i, v \rangle] = \|\ga^{1/2}v\|^2$, 
denoting $(\ga v) (\bm{T}_i) = \left[ (\ga v) (T_{i1}), \dots, (\ga v) (T_{in_i}) \right]^\top$ the cross-covariance (vector) between $\bm{X}_i$ and $\langle X_i, v \rangle$,
the joint mean and variance of $(\bm{X}_i^\top, \langle X_i, v \rangle)^\top$ (conditional on $\bm{T}_i$) are written by
\begin{align*}
	\begin{bmatrix}
		\bm{X}_i \\ \langle X_i, v \rangle
	\end{bmatrix} 
	\sim \left( 
	\begin{bmatrix} \bm{\mu}_i \\ \langle \mu, v \rangle \end{bmatrix},
	\begin{bmatrix} \bm{\Sigma}_i & (\ga v) (\bm{T}_i) \\ (\ga v) (\bm{T}_i)^\top & \|\ga^{1/2} v\|^2 \end{bmatrix}
	\right),
\end{align*}
where $Z \sim (\theta, \Oa)$ represents $\eo[Z] = \theta$ and $\var[Z] = \Oa$ for a general random element $Z$. 
The conditional mean $\eta_i(v) \equiv \eo[\langle X_i, v \rangle| \bm{\xx}_i]$ and variance $\Psi_i(v) \equiv \var[\langle X_i, v \rangle| \bm{\xx}_i]$ of the projection $\langle X_i, v \rangle$ given $\bm{\xx}_i  = (\bm{T}_i, \bm{X}_i)$ are then 
\begin{align} 
	\eta_i(v) = \eta(v|\bm{\xx}_i; \mu, \ga, \s_\e^2) & = \langle \mu, v \rangle + \{(\ga v) (\bm{T}_i)\}^\top \bm{\Sigma}_i^{-1} (\bm{X}_i - \bm{\mu}_i), \label{eqCDmean}\\
	\Psi_i(v) = \Psi(v|\bm{\xx}_i; \ga, \s_\e^2)  & = \|\ga^{1/2} v\|^2 - \{(\ga v) (\bm{T}_i)\}^\top \bm{\Sigma}_i^{-1} \{(\ga v) (\bm{T}_i)\}. \label{eqCDvar}
\end{align}
These conditional mean and variance forms hold for $(X_0, \bm{\xx}_0)$ by replacing $i$ with $0$ in the sub-index, 
which yields the conditional mean and variances of the projection difference $\langle X_i - X_0, v \rangle$ given the sparse functional data $\bm{\xx}_i, \bm{\xx}_0$ as
\begin{align*}
	\langle X_i - X_0, v \rangle | \bm{\xx}_i, \bm{\xx}_0
	\sim \left(  \eta_i(v) - \eta_0(v), \Psi_i(v) +  \Psi_0(v)
	\right).
\end{align*}

To obtain a closed-form expression for the relevant conditional distribution, 
we adopt a Gaussian working model in which the FPC scores of $X$ and the measurement errors are jointly Gaussian. 
Under this model, $\langle X_i, v \rangle$ and $\bm{X}_i$ are jointly Gaussian, which implies that
\begin{align*}
	\langle X_i - X_0, v \rangle | \bm{\xx}_i, \bm{\xx}_0
	\sim \nd \left(  \eta_i(v) - \eta_0(v), \Psi_i(v) +  \Psi_0(v)
	\right),
\end{align*}
where $\nd$ generically denotes the normal distribution in a finite-dimensional space. 
Consequently, under the Gaussian working model, the ACRHD $D_{\mathrm{ave},\ld}$ defined in \eqref{eqESRHD} admits the closed-form representation
\begin{align} \label{eqESRHDgaussian}
	D_{\mathrm{ave},\ld}(\bm{\xx}_0) 
	= \inf_{v \in \vv_\ld} n^{-1} \sum_{i=1}^n \left\{ 1 - \Phi \left( \eta_0(v) - \eta_i(v) \over  \{\Psi_0(v) + \Psi_i(v)\}^{1/2}
	\right) \right\},
\end{align}
where $\Phi$ denotes the cumulative distribution function of the standard normal distribution.

\begin{rem} \label{remGau}
	Gaussian assumptions are commonly imposed in sparse functional data analysis \citep{YMW05a, YMW05b, DMT18, GDM21, ZM25}, 
	and also adopted in some settings with densely observed functional data \citep{PKM10, QGJ19}. 
	The role of Gaussianity in the present work is different. 
	The population definition of the proposed depth in \eqref{eqSRHD1} does not require Gaussianity; 
	it is defined through conditional halfspace probabilities and is therefore meaningful beyond the Gaussian setting. 
	We invoke a Gaussian working model only at the estimation stage, 
	in order to obtain tractable closed-form expressions for the conditional distributions needed to compute the ACRHD in \eqref{eqESRHDgaussian}. 
	In this sense, Gaussianity is not a modeling requirement for the depth itself, 
	but a practical device for estimating it from sparse and noisy observations. 
	Our simulation results in \autoref{sec4} suggest that the resulting depth values and their associated rank-based procedures remain stable under non-Gaussian data-generating mechanisms. 
	Developing a general estimation strategy without the Gaussian working model would require estimating the conditional distribution of a projection given sparse functional observations, 
	which is a challenging problem that, to our knowledge, has not been systematically studied in the literature. 
	We leave this direction for future research.
	We return to this point in \autoref{secConclusion}, 
	where we discuss conditional-distribution estimation for sparse functional observations as a broader future research direction.
\end{rem}

Let $\hat{\mu}$, $\hat{\ga}$, and $\hat{\s}_\e^2$ respectively denote the estimators of $\mu$, $\ga$, and $\s_\e^2$ at hand;
the estimation of these parameters will be detailed soon.
We substitute these estimators $\hat{\mu}, \hat{\ga}, \hat{\s}_\e^2$ into the expressions for the estimated conditional mean and variance as follows:
\begin{align}
	\hat{\eta}_i(v) = \hat{\eta}(v|\bm{\xx}_i; \hat{\mu}, \hat{\ga}, \hat{\s}_\e^2) 
	& \equiv  \langle \hat{\mu}, v \rangle + \{(\hat{\ga} v) (\bm{T}_i)\}^\top \hat{\bm{\Sigma}}_i^{-1} (\bm{X}_i - \hat{\bm{\mu}}_i), \label{eqCDmean_est}\\
	\hat{\Psi}_i(v) = \hat{\Psi}(v|\bm{\xx}_i; \hat{\ga}, \hat{\s}_\e^2)
	& \equiv \|\hat{\ga}^{1/2} v\|^2 - \{(\hat{\ga} v) (\bm{T}_i)\}^\top \hat{\bm{\Sigma}}_i^{-1} \{(\hat{\ga} v) (\bm{T}_i)\} \label{eqCDvar_est},
\end{align}
where $\hat{\bm{\mu}}_i  = [ \hat{\mu}(T_{i1}), \dots, \hat{\mu}(T_{in_i}) ]^\top$ and $\hat{\bm{\Sigma}}_i  = [ \hat{\s}_\e^2 \dt_{jj'} + \hat{\ga}(T_{ij}, T_{ij'}) ]_{1 \leq j,j' \leq n_i}$.
The last unknown quantity in \eqref{eqESRHDgaussian} to be estimated is the projection direction set $\vv_\ld$ given in \eqref{eqProjDir}. 
To define the sample projection direction set, let $\hat{\g}_j$ and $\hat{\phi}_j$ denote the eigenvalue and eigenfunction of the estimated covariance $\hat{\ga}$, respectively.
The estimated projection direction set is then obtained by replacing the population quantities with their estimators:
\begin{align} \label{eqProjDir_est}
	\hat{\vv}_{\ld,K} \equiv \{ \hat{v} \in \mathrm{span} \{\hat{\phi}_1, \dots, \hat{\phi}_K \} : \|\hat{v}\|=1, \|\hat{\ga}_K^{-1/2} \hat{v}\| \leq \ld \},
\end{align}
where $\hat{\ga}_K^{-1/2} \equiv \sum_{k=1}^K \hat{\g}_k^{-1} (\hat{\phi}_k \otimes \hat{\phi}_k)$ is a finite sample approximation of $\ga^{-1/2} \equiv \sum_{k=1}^\infty \g_k^{-1/2} (\phi_k \otimes \phi_k)$ up to the truncation level $K$.
We define the \emph{sample ACRHD} by
\begin{align} \label{eqESRHDsample}
	\hat{D}_{\mathrm{ave},\ld}(\bm{\xx}_0) 
	= \inf_{\hat{v} \in \hat{\vv}_{\ld, K}} n^{-1} \sum_{i=1}^n \left\{ 1 - \Phi \left(
	\hat{\eta}_0(\hat{v}) - \hat{\eta}_i(\hat{v}) \over \{\hat{\Psi}_0(\hat{v}) + \hat{\Psi}_i(\hat{v})\}^{1/2}
	\right) \right\},
\end{align}
using the estimators $\hat{\eta}_i$, $\hat{\Psi}_i$, and $\hat{\vv}_{\ld,K}$ constructed respectively in \eqref{eqCDmean_est}, \eqref{eqCDvar_est}, and \eqref{eqProjDir_est}.

We emphasize that the estimators $\hat{\mu}$, $\hat{\ga}$, and $\hat{\s}_\e^2$ are used directly in the construction of the sample ACRHD in \eqref{eqESRHDsample}, rather than to reconstruct the underlying trajectories. 
This is a key distinction from two-stage approaches, such as those in \cite{LW11, LQ21, SL21, QG22}, which first use such estimators to predict the underlying curves and then evaluate depth values on the reconstructed data. 
By avoiding this reconstruction step, the proposed sample ACRHD is designed to retain the uncertainty induced by sparse and noisy observations more directly. 
In contrast, two-stage approaches may lose part of this uncertainty through the smoothing or averaging inherent in trajectory prediction. 
As demonstrated numerically in \autoref{sec4}, 
the proposed CRHD for sparse functional data outperforms the standard two-stage approach that applies the original RHD in \eqref{eqRHDdense} to reconstructed curves across a range of scenarios.

\begin{rem}
	Although Equation~\eqref{eqESRHDgaussian} provides a closed-form expression for the ACRHD under the Gaussian working model, 
	its computation still requires estimation of the unknown model quantities: 
	the mean function $\mu$, covariance $\ga$, and noise variance $\s_\e^2$. 
	For this purpose, we adopt the PACE method, which uses local linear smoothing to estimate these quantities and is widely used in sparse functional data analysis \citep{YMW05a, YMW05b, LH10, ZW16, DMT18, GDM21}. 
	This method is also conveniently implemented in the \texttt{fdapace} package \citep{fdapace}. 
	Other recent estimation techniques \citep[e.g.,][]{Kraus15, KL20, LW22, ZM25} could also be incorporated in principle, provided that they yield sufficiently accurate estimates of the required quantities. 
	Throughout the following discussion, however, $\hat{\mu}$, $\hat{\ga}$, and $\hat{\s}_\e^2$ denote the estimators of $\mu$, $\ga$, and $\s_\e^2$, respectively, obtained from the PACE method.
\end{rem}

The sample CRHDs in \eqref{eqESRHDsample} and \eqref{eqSRHDsample} involve two tuning parameters: 
the regularization parameter $\ld \in (0,\infty)$ and the truncation level $K$.

The regularization parameter $\ld$ is the more influential of the two. 
It is a user-specified parameter that should be chosen according to the purpose of the depth analysis. 
When functional observations differ mainly in magnitude, a relatively small value of $\ld$ may be sufficient. 
In contrast, when differences in shape are of primary interest, a larger value of $\ld$ may be more effective, although such differences can be difficult to recover from sparse and noisy observations. 
If a task-specific criterion is available, such as the misclassification error in classification problems, $\ld$ can be selected to optimize that criterion. 
In the absence of such a criterion, we recommend as a default choice using quantile levels of the empirical RKHS norms, as described in \autoref{alg1}.

The truncation level $K$ is of secondary importance, but it should be chosen sufficiently large so that the sample direction set $\hat{\vv}_{\ld,K}$ in \eqref{eqProjDir_est} provides an adequate approximation to the population direction set $\vv_\ld$ in \eqref{eqProjDir}. 
A common and practical choice is based on the fraction of variance explained (FVE), which selects $K$ so that the leading FPCs explain a prescribed proportion of the sample variation \citep{YMW05a, DMT18, GDM21}. 
Specifically, for a given threshold $\rho \in (0,1)$, such as $\rho=0.85$, we choose the smallest $K$ whose FVE is no less than $\rho$. 
Once $K$ is sufficiently large, the performance of the CRHDs is typically relatively insensitive to its exact value.

\begin{rem}
	
	As with the ACRHD in \eqref{eqESRHDgaussian}, the Gaussian working model yields a closed-form expression for the original CRHD in \eqref{eqSRHD1}:
	\begin{align} \label{eqSRHDgaussian}
		D_\ld(\bm{\xx}_0)
		= \inf_{v \in \vv_\ld} \left\{ 1-\Phi \left( \theta_0(v) - \langle \mu, v \rangle \over ( \Psi_0(v) + \|\ga^{1/2}v\|^2)^{1/2} \right) \right\},
	\end{align}
	where $\theta_0(v)$ and $\Psi_0(v)$ are given in \eqref{eqCDmean}--\eqref{eqCDvar}, respectively. 
	By plugging the corresponding estimators $\hat{\theta}_0$ and $\hat{\Psi}_0$ from \eqref{eqCDmean_est}--\eqref{eqCDvar_est} into this expression, we define the \emph{sample plug-in CRHD} (PCRHD) as
	\begin{align}  \label{eqSRHDsample}
		\hat{D}_{\mathrm{plug},\ld}(\bm{\xx}_0)
		= \inf_{\hat{v} \in \hat{\vv}_{\ld,K}} \left\{ 1-\Phi \left( \hat{\theta}_0(\hat{v}) - \langle \hat{\mu}, \hat{v} \rangle \over ( \hat{\Psi}_0(\hat{v}) + \|\hat{\ga}^{1/2}\hat{v}\|^2)^{1/2} \right) \right\}.
	\end{align}
	This plug-in version effectively applies the Gaussian working model at the population level, through the unconditional distribution of the underlying process $X$. 
	Consequently, PCRHD may be more sensitive to violations of Gaussianity. 
	In contrast, the ACRHD in \eqref{eqESRHDgaussian} uses the Gaussian working model only to compute conditional distributions associated with the observed sparse trajectories, and then averages the resulting pairwise conditional halfspace probabilities over the sample. 
	This more localized use of the Gaussian working model can make ACRHD less sensitive to non-Gaussian features of the underlying process. 
	Although PCRHD provides a natural alternative to ACRHD, 
	our simulation results in \autoref{sec4} indicate that ACRHD generally performs better.
\end{rem}

\section{Computation by random projections}  \label{sec3}

This section introduces a practical computation method for the CRHDs.
We adopt a random projection approach, which has been widely used in depth literature \cite[Section~6]{Dyck2004}.

Since the sample direction set $\hat{\vv}_{\ld,K}$ in \eqref{eqProjDir_est} is $K$-dimensional, 
the sample CRHDs can be re-expressed using directions in the $K$-dimensional Euclidean space $\R^K$, facilitating its subsequent approximation.
We denote the direction $\hat{v} = \sum_{k=1}^K \hat{a}_k \hat{\phi}_k \in \hat{\HH}_K \equiv \mathrm{span}\{ \hat{\phi}_1, \dots, \hat{\phi}_K \}$ by a linear combination of the first $K$ sample eigenfunctions $\{\hat{\phi}_k\}_{k=1}^K$ with the $K$-dimensional coefficient vector $\hat{\bm{a}} = (\hat{a}_1, \dots, \hat{a}_K)^\top \in \R^K$. 
For further development, let 
\begin{align} \label{eqProjDirA} 
	\hat{\aaa}_{\ld,K} \equiv \{ \hat{\bm{a}} \in \R^K: \|\hat{\bm{a}}\|_{\R^K} = 1, \|\hat{\bm{\rr}}_K^{-1/2} \hat{\bm{a}}\|_{\R^K} \leq \ld\}
\end{align}
be the intersection of the unit sphere and a ellipsoid centered at the origin in $\R^K$, where $\hat{\bm{\rr}}_K \equiv \mathrm{diag}(\hat{\g}_1, \dots, \hat{\g}_K)$ is the diagonal matrix with components being the first $K$ sample eigenvalues. 
It is worthwhile noting that there is one-to-one correspondence 
between the estimated direction set $\hat{\vv}_{\ld,K}$ in \eqref{eqProjDir_est} and this $K$-dimensional direction set $\hat{\aaa}_{\ld,K}$ in \eqref{eqProjDirA} as 
\begin{align} \label{eq11map}
	\hat{\aaa}_{\ld, K} \ni \hat{\bm{a}} = (\hat{a}_1, \dots, \hat{a}_K)^\top \mapsto \hat{v} = \sum_{k=1}^K \hat{a}_k \hat{\phi}_k \in \hat{\vv}_{\ld,K}
\end{align}
since $\|\hat{v}\| = \|\hat{\bm{a}}\|_{\R^K}$ and $\|\hat{\ga}_K^{-1/2}\hat{v}\| = \|\hat{\bm{\rr}}_K^{-1/2} \hat{\bm{a}}\|_{\R^K}$.
Through the bijective map in \eqref{eq11map}, the conditional mean and variance estimators can be computed by using $K$-dimensional directions as
\begin{align}
	\hat{\theta}_{K,i}(\hat{\bm{a}})
	= \hat{\theta}_K(\hat{\bm{a}}|\bm{\xx}_i; \hat{\mu}, \hat{\ga}, \hat{\s}_\e^2)
	& \equiv (\hat{\bm{\Pi}}_K \hat{\mu})^\top \hat{\bm{a}} + \hat{\bm{a}}^\top \hat{\bm{\rr}}_K \hat{\bm{\Phi}}_{iK}^\top \hat{\bm{\Sigma}}_i^{-1} (\bm{X}_i - \hat{\bm{\mu}}_i), \label{eqCDmean_est_a} \\
	\hat{\Psi}_{K,i}(\hat{\bm{a}})
	= \hat{\Psi}_K(\hat{\bm{a}}|\bm{\xx_i}; \hat{\ga}, \hat{\s}_\e^2)
	& \equiv \|\hat{\bm{\rr}}_K^{1/2} \hat{\bm{a}}\|_{\R^K}^2 - \hat{\bm{a}}^\top \hat{\bm{\rr}}_K \hat{\bm{\Phi}}_{iK}^\top \hat{\bm{\Sigma}}_i^{-1} \hat{\bm{\Phi}}_{iK} \hat{\bm{\rr}}_K \hat{\bm{a}}, \label{eqCDvar_est_a}
\end{align}
where the operator $\hat{\bm{\Pi}}_K:\HH \to \R^K$ with finite rank is defined as 
\begin{align} \label{eqProjOp}
	\hat{\bm{\Pi}}_K f \equiv \left[ \langle f, \hat{\phi}_1 \rangle, \dots, \langle f, \hat{\phi}_K \rangle \right]^\top, \quad f \in \HH
\end{align}
and $\hat{\bm{\Phi}}_{iK} = [ \hat{\phi}_k(T_{ij}) ]_{1 \leq j \leq n_i, 1 \leq k \leq K}$ is an $n_i \times K$ matrix with property that
\begin{align*}
	(\hat{\ga} \hat{v}) (\bm{T}_i) 
	= \left[ \sum_{k=1}^K \hat{a}_k \hat{\g}_k \hat{\phi}_k(T_{ij}) \right]_{1 \leq j \leq n_i}
	= \hat{\bm{\Phi}}_{iK} \hat{\bm{\rr}}_K \hat{a}.
\end{align*}
The sample ACRHD in \eqref{eqESRHDsample} can be written by using \eqref{eqCDmean_est_a}-\eqref{eqCDvar_est_a} as
\begin{align} \label{eqESRHDsample_a}
	\hat{D}_{\mathrm{ave},\ld}(\bm{\xx}_0) 
	= \inf_{\hat{\bm{a}} \in \hat{\aaa}_{\ld, K}} n^{-1} \sum_{i=1}^n \left\{ 1 - \Phi \left( 
	\hat{\theta}_{K,0}(\hat{\bm{a}}) - \hat{\theta}_{K,i}(\hat{\bm{a}}) \over \{\hat{\Psi}_{K,0}(\hat{\bm{a}}) + \hat{\Psi}_{K,i}(\hat{\bm{a}})\}^{1/2}
	\right) \right\}
\end{align}
since the conditional quantities in \eqref{eqCDmean_est_a}-\eqref{eqCDvar_est_a} match those in \eqref{eqCDmean_est}-\eqref{eqCDvar_est} through the one-to-one correspondence in \eqref{eq11map}.
We draw a sufficiently large number $L$ of random vectors $\tilde{\aaa}_{\ld,K,L} = \{ \bm{a}_l = (\hat{a}_{l1}, \dots, \hat{a}_{lK})^\top \}_{l=1}^L$ independently and identically distributed from a continuous distribution $\nu$ supported on $\hat{\aaa}_{\ld,K} \subseteq \R^L$ to construct a finite direction set 
\begin{align*}
	\tilde{\vv}_{\ld,K,L} \equiv \left\{ \tilde{v} = \sum_{k=1}^K \hat{a}_k \hat{\phi}_k: \bm{a} = (\hat{a}_1, \dots, \hat{a}_K)^\top \in \tilde{\aaa}_{\ld,K,L} \right\}.
\end{align*}
We then approximate the sample ACRHD in  \eqref{eqESRHDsample} or \eqref{eqESRHDsample_a} with the minimum halfspace probability over $\tilde{\vv}_{\ld,K,L}$ as
\begin{align} \label{eqESRHDapprox}
	\tilde{D}_{\mathrm{ave},\ld,L}(\bm{\xx}_0) 
	= \min_{1 \leq l \leq L} n^{-1} \sum_{i=1}^n \left\{ 1 - \Phi \left( 
	\hat{\theta}_{K,0}(\hat{\bm{a}_l}) - \hat{\theta}_{K,i}(\hat{\bm{a}_l}) \over \{\hat{\Psi}_{K,0}(\hat{\bm{a}_l}) + \hat{\Psi}_{K,i}(\hat{\bm{a}_l})\}^{1/2}
	\right) \right\}.
\end{align}
The sample PCRHD in \eqref{eqSRHDsample} is similarly approximated by
\begin{align} \label{eqSRHDapprox}
	\tilde{D}_{\mathrm{plug},\ld, L}(\bm{\xx}_0) 
	= \min_{1 \leq l \leq L} \left\{ 1-\Phi \left( \hat{\theta}_{K,0}(\hat{\bm{a}}_l) - (\hat{\bm{\Pi}}_K \hat{\mu})^\top \hat{\bm{a}}_l \over ( \hat{\Psi}_{K,0}(\hat{\bm{a}}_l) + \|\hat{\bm{\rr}}_K^{1/2} \hat{\bm{a}}_l\|_{\R^K}^2)^{1/2} \right) \right\}.
\end{align}



We employ a rejection sampling idea to draw the random projections $\tilde{\aaa}_{\ld,K,L} = \{ \bm{a}_l \}_{l=1}^L$. 
Namely, the proposal distribution for $\tilde{\aaa}_{\ld,K,L} = \{ \bm{a}_l \}_{l=1}^L$ is determined by the rejection sampling from the distribution $\mathsf{U}_K = \mathsf{U}(\hat{\rr}_K)$ of the normalized normal vector with mean zero and covariance $\hat{\bm{\rr}}_K$.
We set this non-isotropic distribution $\mathsf{U}_K$ on the unit sphere to improve the acceptance rate.
The approximate CRHDs in \eqref{eqESRHDapprox} and \eqref{eqSRHDapprox} are constructed using only the accepted directions from $\mathsf{U}_K$.
Namely, if a direction $\bm{a}$ is drawn from $\mathsf{U}_K$,
we accept it if $\|\hat{\bm{\rr}}_K^{-1/2} \hat{\bm{a}}\|_{\R^K} \leq \ld$ and reject it otherwise.

In addition, when multiple regularization parameters are considered, our computation of the approximate CRHDs in \eqref{eqESRHDapprox} and \eqref{eqSRHDapprox} maintains the monotonicity of the CRHDs in $\ld$ as the exact CRHDs satisfy: for $\ld, \ld' \in (0,\infty)$ with $\ld \leq \ld'$, we have $\tilde{D}_{\mathrm{ave},\ld,L}(\bm{\xx}_0)  \geq \tilde{D}_{\mathrm{ave},\ld',L}(\bm{\xx}_0)$.
For more details, let $\Ld$ be a finite set of distinct regularization parameters used for evaluating the CRHDs.
We first generate a large number $L_0$ of directions $\tilde{\aaa}_{K, L_0} \equiv \{\bm{a}_l\}_{l=1}^{L_0}$ from $\mathsf{U}_K$.
These directions are then simultaneously used to compute the family $\{ \tilde{D}_{\mathrm{ave},\ld,L}: \ld \in \Ld \}$ of the multiple depth functions under consideration, by using the rejection sampling idea explained above.

\begin{rem}
	In practice, depending on the regularization $\ld \in (0,\infty)$ and the first $K$ sample eigenfunctions $\{ \hat{\g}_k \}_{k=1}^K$,
	the ellipsoid $\{ \bm{a} \in \R^K: \|\bm{\rr}_K^{1/2} \bm{a}\|_{\R^K} \leq \ld \}$  in $\R^K$ may either entirely contain the unit sphere $\SSS^K \equiv \{ \bm{a} \in \R^K: \|\bm{a}\|_{\R^K}=1 \}$ or be disjoint from $\SSS^K$.
	This could lead to the sample CRHD \eqref{eqESRHDsample} or \eqref{eqESRHDsample_a} approaching the degenerate depth or the direction set $\hat{\aaa}_{\ld,K}$ in \eqref{eqProjDirA} being empty. 
	To avoid this, we do not directly select the regularization parameter $\ld \in (0,\infty)$.
	Instead, we set the $u$-th quantile of the generated RKHS norms $\nn_L \equiv \{ \| \hat{\bm{\rr}}_K^{-1/2} \hat{\bm{a}}_l \}_{l=1}^L$, where $u \in (0,1)$ is referred to as the \emph{quantile level} for regularization.
	This approach guarantees that $\ld $ lies within the interval $(\hat{\g}_1^{-1/2}, \hat{\g}_K^{-1/2})$,
	thereby indicating that the generated direction set $\tilde{\aaa}_{\ld,K,L}$ is non-empty $(\ld > \hat{\g}_1^{-1/2})$ and that the resulting ACRHD does not approach degeneracy  $(\ld < \hat{\g}_K^{-1/2})$.
	In our simulation, the CRHDs with $u=0.95$ quantile level generally perform well.
	See \cite{WN25} for similar strategy of using quantile levels.
\end{rem}

\autoref{alg1} provides a detailed computation procedure for the proposed depth functions upon incorporating the issues mentioned here.

\begin{algorithm2e}[b!]
	\small
	\caption{
		Approximate sample (averaged and original) CRHDs 
	}
	\label{alg1}
	\KwData{
		Sparsely observed sample $\{\bm{\xx}_i\}_{i=1}^n$ and depth evaluation point $\bm{\xx}_0$ of sparse functional data
	} \nlnonumber
	\KwTune{
		Either regularization set $\Ld$ or quantile level set $\uu$. 
		Also truncation parameter $K$
	}
	
	\KwResult{
		Approximate sample CRHDs $\tilde{D}_{\mathrm{ave},\ld,L}(\bm{\xx}_0)$ in \eqref{eqESRHDapprox} and $\tilde{D}_{\mathrm{plug},\ld,L}(\bm{\xx}_0)$ in \eqref{eqSRHDapprox} for $\ld \in \Ld$
	}
	\For{$l = 1,\dots, L_0$}{
		Generate $\bm{z}_l \sim \nd(0, \hat{\rr}_K)$ and compute $\hat{\bm{a}}_l = \bm{z}_l/\|\bm{z}_l\|_{\R_K}$
		
	}	
	
	\If{
		$\Ld$ \emph{is unspecified}
	}{
		\For{$u \in \uu$}{
			Add the $u$-quantile of $\{ \| \hat{\rr}_K^{-1/2} \hat{\bm{a}}_l \|_{\R^K} \}_{l=1}^{L_0}$ to $\Ld$
		}
	}

	\For{ $\ld \in \Ld$}{
		
		$\tilde{\aaa}_{\ld, K, L} \leftarrow \emptyset$
		
		\For{$l = 1,\dots,L_0$}{
			\If{$\| \hat{\rr}_K^{-1/2} \hat{\bm{a}}_l \|_{\R^K} \leq \ld$}{
				Add $\hat{\bm{a}}_l$ to $\tilde{\aaa}_{\ld, K, L}$
			}\If{$|\tilde{\aaa}_{\ld, K, L}|=L$}
			
		}
		
		$\tilde{D}_{\mathrm{ave},\ld,L}(\bm{\xx}_0) 
		\leftarrow \min_{\hat{\bm{a}} \in \tilde{\aaa}_{\ld,K,L}} n^{-1} \sum_{i=1}^n \left\{ 1 - \Phi \left( 
		\hat{\theta}_{K,0}(\hat{\bm{a}}) - \hat{\theta}_{K,i}(\hat{\bm{a}}) \over \{\hat{\Psi}_{K,0}(\hat{\bm{a}}) + \hat{\Psi}_{K,i}(\hat{\bm{a}})\}^{1/2}
		\right) \right\}$
		
		$\tilde{D}_{\mathrm{plug},\ld, L}(\bm{\xx}_0) 
		\leftarrow \min_{\hat{\bm{a}} \in \tilde{\aaa}_{\ld,K,L}} \left\{ 1-\Phi \left( \hat{\theta}_{K,0}(\hat{\bm{a}}) - (\hat{\bm{\Pi}}_K \hat{\mu})^\top \hat{\bm{a}} \over ( \hat{\Psi}_{K,0}(\hat{\bm{a}}) + \|\hat{\bm{\rr}}_K^{-1/2} \hat{\bm{a}}\|_{\R^K}^2)^{1/2} \right) \right\}$
		
	}
	
\end{algorithm2e}

\section{Numerical studies for rank-based inference} \label{sec4}

We investigate the practical performance of the proposed CRHDs.
\autoref{ssec_4_1} describes the general simulation settings used throughout the numerical studies, 
while Sections~\ref{ssec_4_2}--\ref{ssec_4_4} present the results for rank recovery and rank-based testing. 
Our main objective is to compare direct depth evaluation via the proposed averaged and plug-in CRHDs with two-stage approaches based on either predicted curves or the leading FPC scores. 
We also compare the ACRHD with the PCRHD, expecting the former to be more stable because it relies on averaged conditional halfspace probabilities rather than a global plug-in approximation. 
Overall, the numerical results demonstrate that the proposed CRHDs provide more effective rankings than two-stage approaches in a variety of sparse functional data settings.


\subsection{Simulation setup and competing depths} \label{ssec_4_1}

The underlying curves are independently and identically constructed based on the following truncated Karhunen--Lo\`{e}ve expansion:
\begin{align} \label{eqKLtrunc}
	X \overset{\mathsf{d}}{=} \mu+\sum_{k=1}^{K^*} \g_k^{1/2} \xi_k \phi_k,
\end{align}
where $K^*=15$ denotes the true number of FPCs. The eigenvalues $\{\g_j\}$ are determined by the polynomial decay rates of the eigengaps $\g_j - \g_{j+1} = 2j^{-a}$ with $\g_1 = 2 \sum_{j=1}^\infty j^{-a}$ for various dacay rates $a$ under consideration; a lower $a$ represents wiggle curves while a higher $a$ corresponds to smooth curves. The eigenfunctions are chosen to be the set of the first $K^*$ Fourier basis functions $\{1, \sin (2\pi t), \cos (2 \pi t), \dots \}$ in $L^2([0,1])$. The mean function $\mu$ will be specified depending on the simulation studies.

We outline the combinations of the distributions for the FPC scores $\xi_k$ in \eqref{eqKLtrunc} and the errors $\e_{ij}$ in \eqref{eqObsData} employed in the simulation studies. The FPC scores are generated as $\xi_k = \xi W_k$ via the latent variable $\xi$ and iid random variables $\{W_k\}$, where $\xi$ and $\{W_k\}$ are independent. Three distribution pairs for $(\xi, W_k)$ are examined here: $\xi =1$ and $W_k \sim \nd(0,1)$, $\xi \sim \nd(0,1)$ and $W_k \sim \nd(0,1)$, and $\xi \sim \mathsf{Unif}(-\sqrt{3},\sqrt{3})$ and $W_k \sim \mathsf{Unif}(-\sqrt{3},\sqrt{3})$, which are labeled as Gaussian, NN, and UU, respectively. 
It is important to note that the latter two \textit{dependent} FPC scores may produce complex functional data and make inference challenging, as investigated in other contexts \cite[e.g.,][]{YDN23RB}. 
The errors follow either a centered chi-square distribution with degree of freedom $\nu = \s^2/2$ or a centered normal distribution with standard deviation $\s_\e$, denoted respectively by $\chi^2$ and Normal, where $\s_\e^2 \in \{0.01, 0.1, 1\}$ denotes the noise level. 
The observed data are then generated by $\tilde{X}_{ij} = X_i(T_{ij}) + \e_{ij}$ as in \eqref{eqObsData}, where the number $n_i$ of observation in each trajectory is drawn from $\{2, \dots, 9\}$ with replacement and the measurement times $T_{ij}$ are uniformly sampled in $[0,1]$. The sample size $n$ is considered differently according to the simulations.
This data generation process will apply to all simulation studies discussed in this section.

All simulation results are visualized by colored line graphs. The red and blue curves in the figures respectively indicates the results by evaluating the averaged and plug-in CRHDs in \eqref{eqESRHDapprox} and \eqref{eqSRHDapprox} at the observed data $\{\bm{\xx}_i\}_{i=1}^n$, which are denoted as ``ACRHD'' and ``PCRHD''. 
The green curves, on the other hand, represent the results by applying the original RHD \eqref{eqRHDdense} to the dense curves predicted from the PACE technique \citep{YMW05a, YMW05b, GDM21}, 
which is considered our main competitor. 
This is referred to as ``TwoStageRHD''. 
Another two-stage approach is also considered,
where the depth values are computed by Tukey's depth \citep{tukey75} for the vector of the first $K$ FPC scores from the PACE technique;
we use a random projection approach \cite[cf.][]{CN08} implemented in \texttt{ddalpha} package as its exact calculation is not feasible with high dimension.
This approach is denoted by ``TwoStageTHD''.
For all the RHD-based methods, we chose the quantile levels $u \in \{0.4, 0.6, 0.8, 0.95\}$,
while all methods equally used the truncation levels $K \in \{2, 4, 6, 8, 10\}$ and the approximation sizes $L = 1000K$.


All simulations are based on Monte Carlo simulations with $M=1000$ iterations, and we provide the averages of evaluation criteria depending on purposes.
For the sake of brevity, we report the results when $\xi_j$ are of NN-type and the errors are chi-square-distributed.
It is worth noting that,
while this data generation is distinctly non-Gaussian, 
the proposed depths actually perform even better than in Gaussian cases.

\begin{rem} \label{remCompare}
	The purpose of the numerical studies is to demonstrate the advantages of direct depth evaluation, as implemented by the proposed CRHDs, over two-stage approaches. 
	To make the main comparison as fair and focused as possible, we concentrate on methods that use the same estimation technique, namely PACE, and the same underlying type of depth, namely halfspace-type depth. 
	This allows us to isolate the effect of direct depth evaluation from differences caused by the choice of reconstruction method or depth measure. 
	We do not claim, however, that other existing sparse functional depths lack merit; their usefulness may depend on the context and objective of a given statistical analysis. 
	The discussion in this remark focuses on the methods proposed by \cite{SL21,Elias23,QDG22}, which represent important advances for analyzing sparsely or irregularly observed functional data.
	
	First, the $\text{MBD}_U$ method of \citet{SL21} relies on the FPCA approach of \citet{GGC13}, implemented through the \texttt{refund::fpca.sc()} function. 
	To our knowledge, this is the first functional depth that explicitly incorporates uncertainty arising from sparse observation schemes. 
	However, in some of our simulation settings, this implementation encounters numerical difficulties; in particular, when the contaminated curves are observed relatively densely, the routine may estimate the noise level to be nearly zero. 
	Moreover, \citet{SL21} provide only an in-sample version of $\text{MBD}_U$, whereas out-of-sample depth evaluation is essential for depth-based rank tests \citep[cf.][]{KW52, CS12}. 
	For these reasons, we exclude $\text{MBD}_U$ from our numerical comparison.
	
	Another competing depth is POIFD, introduced by \citet{Elias23} and later extended by \citet{QDG22}. 
	Its implementation requires functional observations to share common grid points, which can be achieved through interpolation, as described in the original work. 
	However, the original \texttt{fdaPOIFD} package \citep{Elias23} does not support out-of-sample depth evaluation, which is required for the rank test considered in \autoref{ssec_4_4}. 
	We therefore modify the original R function according to \citet[Definition~2.3]{Elias23} to enable out-of-sample POIFD evaluation. 
	This modification allows us to include POIFD based on modified band depth and halfspace depth, denoted by POIFD\_MBD and POIFD\_HD, respectively, as competing methods in \autoref{ssec_4_4}.
	
	Finally, there are many possible ways to construct sparse functional depths by applying existing finite-dimensional or functional depths to reconstructed objects, such as estimated FPC score vectors or predicted curves. 
	However, since the proposed CRHDs are designed to inherit the strengths of finite-dimensional halfspace depth and the RHD for dense functional data, which have been discussed in prior work \citep[e.g.,][]{MM22, YDL25RHD}, we focus on comparing direct depth evaluation via CRHDs with two-stage approaches based on reconstructed data.
\end{rem}

\subsection{Ranking recovery} \label{ssec_4_2}

The primary attribute that the proposed CRHDs are expected to possess is the ability to recover the ranking structure derived from the original RHD values of the underlying trajectories $\{X_i\}_{i=1}^n$.
To describe the simulation procedure, 
let  $\{r_{q,i}\}_{i=1}^n$, $q \in \{\mathrm{true}, \mathrm{ave}, \mathrm{plug}, \mathrm{pred}\}$, denote 
the rankings based on the RHD for  the underlying true curves $\{X_i\}_{i=1}^n$,
based on the averaged and plug-in CRHDs for the sparse data $\{\bm{\xx}_i\}_{i=1}^n$,
and based on the original RHD for the predicted curves (i.e., TwoStageRHD), respectively.
The recovery property is assessed 
by determining the (Spearman) correlations between the rankings $\{r_{q,i}\}_{i=1}^n$ and $\{r_{\mathrm{true},i}\}_{i=1}^n$ for each $q \in \{\mathrm{ave}, \mathrm{plug}, \mathrm{pred}\}$.
Refer to the supplement for the detailed procedure and \cite{SL21, Elias23} for similar simulation studies. 
In this study, we take into account the decay rates $a \in \{2, 3.5, 5\}$ and sample sizes $n \in \{50, 200, 400\}$, but we only report the rank correlations in \autoref{figSRHDrankcorr}, when $a=3.5$, $n=50$ with $K=4$, 
and the quantile levels are $u \in \{0.4, 0.95\}$.

\autoref{figSRHDrankcorr} displays the rank correlations as a function of the noise level. 
Both proposed CRHDs achieve better rank recovery than the two-stage method, 
which does not directly account for the uncertainty induced by noise and sparse observation schemes when computing depth values. 
Among the two proposed methods, the ACRHD generally outperforms the PCRHD. 
Across all scenarios, we observe the following patterns. 
First, non-Gaussianity in either the FPC scores or the measurement errors has little effect on the rank recovery performance of the proposed CRHDs. 
Second, the rank correlations tend to decrease as the noise level $\s_\e^2$ increases or as the regularization parameter $\ld$ becomes larger, where $\ld$ is determined by the quantile level $u$. 
The latter decrease may occur because the original RHD with larger regularization can produce more ties among observations with the smallest depth values. 
Finally, as the sample size $n$ and the decay rate $a$ increase, the ability to recover the target rankings tends to improve across all methods.

\begin{figure}[!b]
	\centering
	\includegraphics[width=0.9\linewidth]{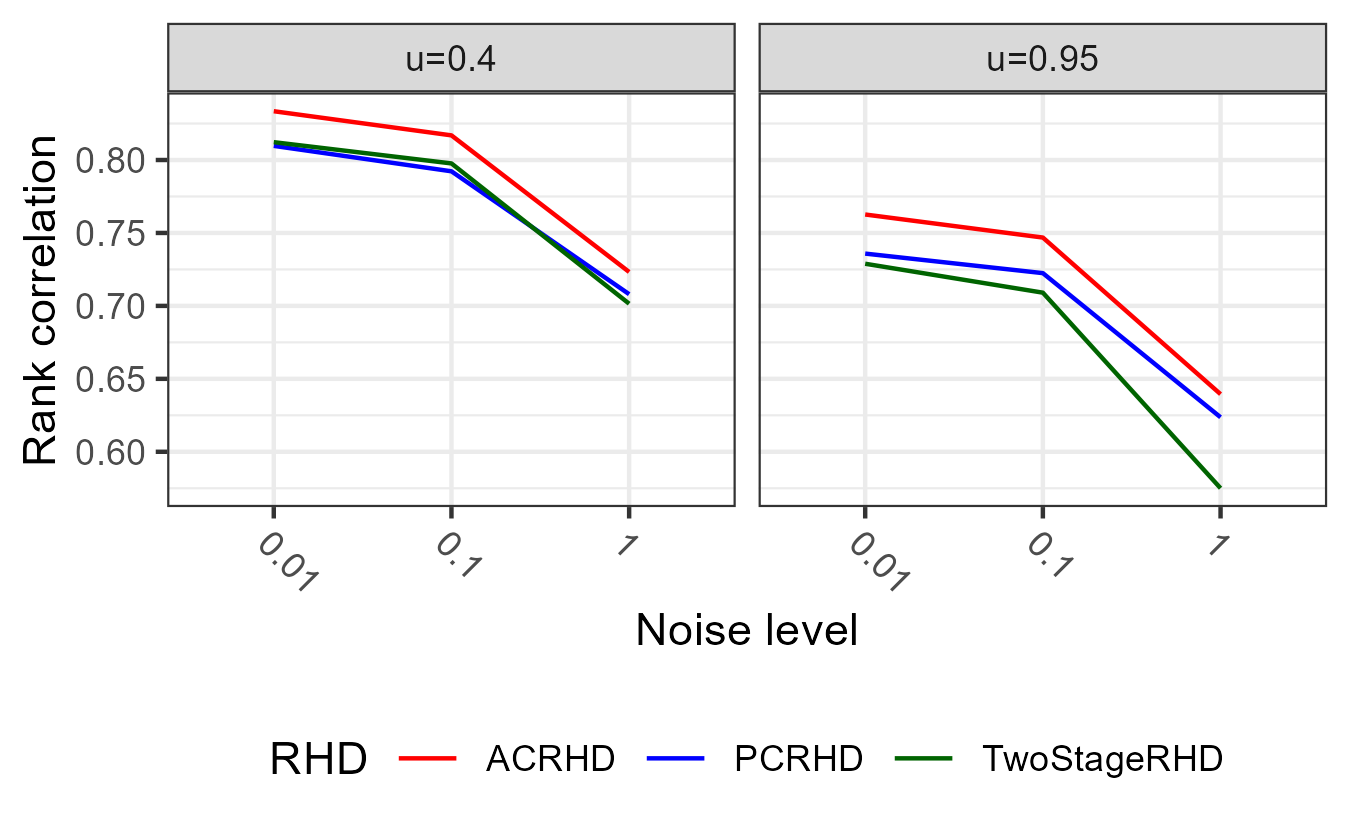}
	\caption{
		Empirical rank correlations of ACRHD, PCRHD, and TwoStageRHD values at sparse observations adding different noise levels with the original RHD values at the true underlying functions.
		The sample size is $n=50$ and the smoothness is determined by $a=3.5$ while the RHDs are constructed with $u \in \{ 0.4, 0.95 \}$ and $K=4$.
		}
	\label{figSRHDrankcorr}
\end{figure}

\subsection{Rank tests} \label{ssec_4_4}

This section examines how effectively the proposed CRHDs and two-stage approaches measure extremeness. 
In rank-based tests, accurately capturing the extremeness of sparse observations is particularly important. 
two-stage approaches can be less effective in this regard because the smoothing or reconstruction step may attenuate features that indicate extremeness, especially under sparse and noisy observation schemes. 
As illustrated by the rank recovery study in \autoref{ssec_4_2}, TwoStageRHD does not recover the RHD-induced rankings of the true underlying curves as accurately as the proposed CRHDs. 
Thus, a key practical distinction between CRHDs and two-stage approaches lies in their ability to preserve extremeness information, and this distinction is expected to be reflected in the performance of rank-based tests.

To provide a more detailed assessment, we evaluate the proposed CRHDs using an existing depth-based rank test, namely the Kruskal--Wallis (KW) test considered by \cite{CS12}, focusing on two-sample problems. 
Specifically, we compute the rank test statistic in \citet[Equation~(10)]{CS12} using the rankings induced by CRHD, and then apply the chi-square approximation described in \citet[Proposition~1]{CS12}. 
The same testing procedure is applied to the competing depths for comparison. 
Further implementation details are provided in the supplement.
In line with \autoref{remCompare}, we do not attempt an exhaustive comparison with the many existing two-sample tests for dense or sparse functional data. 
For comprehensive reviews of functional two-sample inference, we refer to \cite{Wang21, ZW23}; see also \cite{ZSQ25, CLWW25} for recent developments.

To frame the two-sample problems, 
we consider only the cases where the two groups have the same sample size
with total sample sizes $n \in \{100, 200\}$,
indicating the sample sizes $n_0$ and $n_1$ of the control and test groups are equal as $n_0 = n_1 = n/2$.
In all cases described below, we wish to test the null hypothesis $H_0:P_0=P_1$, where $P_0$ and $P_1$ denote the common probability distributions of the control $\{X_{0i}\}_{i=1}^{n_0}$ and test $\{X_{1i}\}_{i=1}^{n_1}$ groups, respectively.
Each (underlying) function in the control group $\{X_{0i}\}_{i=1}^{n_0}$ is independently distributed as $X$ in \eqref{eqKLtrunc} with $\mu^{H_0} = 0$ and $a^{H_0} = 5$.
We consider the following two scenarios for generating the test group: the test group $\{X_{1i}\}_{i=1}^{n_1}$ can differ from the control group $\{X_{0i}\}_{i=1}^{n_0}$ only by either mean or covariance.
Specifically, each function in $\{X_{1i}\}_{i=1}^{n_1}$ still independently follows the Karhunen--Lo\`{e}ve expansion in \eqref{eqKLtrunc} but with different choices of either mean $\mu^{H_1}(t) = \mu_c(t) \equiv  ct$ for $t \in [0,1]$ or decay rate $a^{H_1} = a_c \equiv 5-c$ for $c \in \{0,1,2,3\}$.
The first case of mean difference but with same decay rate $a^{H_0} = 5$ is denoted by MeanDiff,
while the second case of covariance difference but with same mean $\mu^{H_0}=0$ is denoted by CovDiff. 
In both MeanDiff and CovDiff scenarios, $c=0$ renders the null hypothesis $H_0:P_0=P_1$, corresponding to the cases of equal distribution for the two samples $\{X_{0i}\}_{i=1}^{n_0}$ and $\{X_{1i}\}_{i=1}^{n_1}$.
Under these cases, the testing methods are expected to achieve the nominal level.
The sparse observations $\{\bm{\xx}_{0i}\}_{i=1}^{n_0}$ and $\{\bm{\xx}_{1i}\}_{i=1}^{n_1}$ of the two groups are then generated by \eqref{eqKLtrunc} and \eqref{eqObsData}.
We present the empirical rejection rates of the aforementioned KW test based on different depths for sparse functional data. 
The nominal level is chosen as 5\%, 
and for brevity, we only provide the results 
when $\s_\e^2 = 0.1$.
Tthe methods are conducted with quantile level $u=0.95$ (relevant for RHD-related methods) and truncation $K=6$ (applicable to all methods).

\autoref{tb1size} reports the empirical sizes of the KW tests based on different depths. 
The proposed CRHD-based tests, including both the averaged and plug-in versions, closely match the nominal size and are the only procedures that consistently maintain the prescribed level. 
In contrast, the two-stage methods exhibit substantial size distortion, with noticeably inflated false rejection rates. 
The POIFD-based tests also tend to overreject relative to the nominal level, despite showing otherwise reasonable performance in some settings.

\begin{table}[b!] \small
	\centering
	\caption{
		Empirical sizes of the KW tests based on the depths for sparse functional data under consideration
		when the total sample size is $n=100$ and the noise level is $\s_\e^2 = 0.1$.
		The depths are computed by taking $u = 0.95$ and $K=6$. 
		The empirical sizes below the significance level $\alpha = 0.05$ are written in bold,
		while the numbers in the parentheses represent the corresponding standard deviations. 
		}
	\label{tb1size}
	\vspace{0.1in}
	\setstretch{1.3}
	\resizebox{\textwidth}{!}{\begin{tabular}{cccccc} \hline
		ACRHD & PCRHD & TwoStageRHD & TwoStageTHD & POIFD\_HD & POIFD\_MBD \\ \hline
		\textbf{0.049} (0.216) & \textbf{0.032} (0.176) & 0.156 (0.363) & 0.572 (0.495) &
		0.052 (0.222) & 
		0.056 (0.229)\\ \hline
	\end{tabular}}
\end{table}

\autoref{fig_ranktest} presents the empirical rejection rates of the CRHD-based KW tests for different values of $c \in \{0,1,2,3\}$, 
where the gray horizontal lines indicate the nominal 5\% level. 
Because the KW tests based on the two-stage approaches fail to maintain the nominal size, 
we do not report their empirical power in \autoref{fig_ranktest}. 
As expected, the empirical power of both CRHD-based tests increases monotonically with the alternative strength $c$. 
The ACRHD-based rank test tends to be less conservative and therefore achieves higher power than the PCRHD-based test. 
The POIFD-based tests also perform reasonably well, but they appear to be less effective in detecting differences in the second-order structure.

\begin{figure}[!b]
	\centering
	\includegraphics[width=0.89\linewidth]{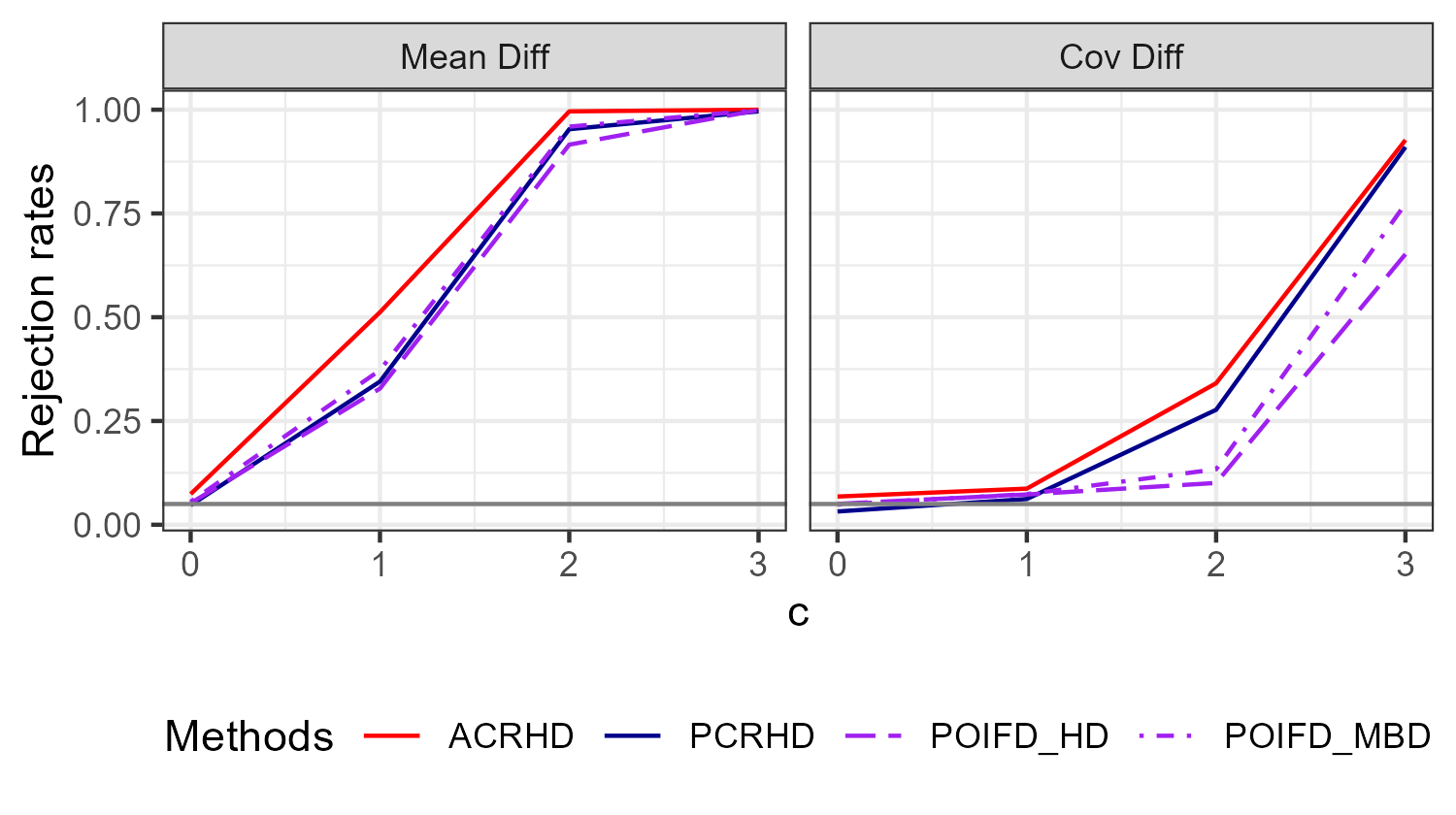}
	\caption{
		Empirical power of the KW tests based on both averaged and plug-in CRHDs as well as the POIFDs using HD (purple dashed lines) and MBD (purple dot-dashed lines),		 
		as the degree $c \in \{0, 1, 2, 3\}$ of the alternative increases; 
		the gray horizontal line represents the nominal size 5\%.
		The total sample size is $n=100$ and the noise level is $\s_\e^2 = 0.1$.
		The depths are computed by taking $u = 0.95$ and $K=6$. 
	}
	\label{fig_ranktest}
\end{figure}

We also observe several additional patterns across the numerical studies, some of which are expected. 
In general, the tests become more powerful as the sample size increases. 
For tests based on ACRHD, PCRHD, and the original RHD, 
a larger regularization parameter improves power in most cases; 
this is consistent with the observation that larger regularization can better emphasize extremeness, 
as discussed for the RHD in \cite{YDL25RHD}. 
Larger noise levels, on the other hand, tend to have an adverse effect on power. 
Finally, Gaussianity of the underlying function $X$ and measurement error $\e$ does not necessarily lead to higher power for the proposed CRHD-based rank tests. 
For example, this can be seen by comparing the Gaussian and NN cases for the random functions when the errors are normal with variance $\s_\e^2=0.1$. 
This finding suggests that the CRHD-based rank tests are relatively stable with respect to the Gaussian working model used in their construction.


\section{Infant growth data analysis} \label{sec5}

This section illustrates the practical benefits of the proposed CRHDs using growth data from the 1988 National Maternal and Infant Health Survey (NMIHS) and its 1991 Longitudinal Followup (LF) \citep{SPK91, SSG98}. 
Although the original dataset contains a variety of variables, we focus on the heights, measured in inches, of boys observed longitudinally over the first two years of life. 
Because doctor visits occurred at irregular times and some responses were missing, 
the resulting growth trajectories are observed sparsely and irregularly. 
Our goal is to examine whether the growth trajectories of normal-birth-weight and low-birth-weight infants, 
defined as those with birth weight no greater than 2500 g, 
follow the same distribution.

This dataset was previously analyzed by \cite{LQ21}, following \cite{LW11, LSLG14}, using a functional-depth-based envelope test. 
Their approach provides a useful visualization tool for interpreting group differences, 
but it relies on predicted curves and is therefore essentially a two-stage procedure. 
In the original dataset, one may expect normal-birth-weight infants to have larger heights overall, as observed in previous analyses. 
However, group differences may also arise from more subtle distributional features, 
such as differences in shape, variability, smoothness, or extremeness of the underlying growth trajectories. 
Such differences are more challenging to detect, especially when the curves are sparsely and irregularly observed.

To focus on this more challenging setting, 
we construct a subset of the original data designed 
to reduce the dominant location difference between the two groups. 
For each subject, we first compute the sample median of the observed heights; 
let $\hat{m}_i$ denote the median of $\{\tilde{X}_{ij}\}_{j=1}^{n_i}$. 
We then retain only subjects whose median height lies in the interval $[24,26]$, and denote the resulting index set by
\[
\ii_{\mathrm{sub}} = \{i: \hat{m}_i \in [24,26]\}.
\]
The resulting dataset contains $n_0=141$ normal-birth-weight infants and $n_1=70$ low-birth-weight infants. 
Because the two groups now overlap substantially in overall magnitude,
any remaining distributional difference is expected to be driven less by location and more by shape-related or extremeness-related features of the sparse growth trajectories,
as depicted in \autoref{fig_rda_growth_curves_subset}.

\begin{figure}[!b]
	\centering
	\includegraphics[width=0.79\linewidth]{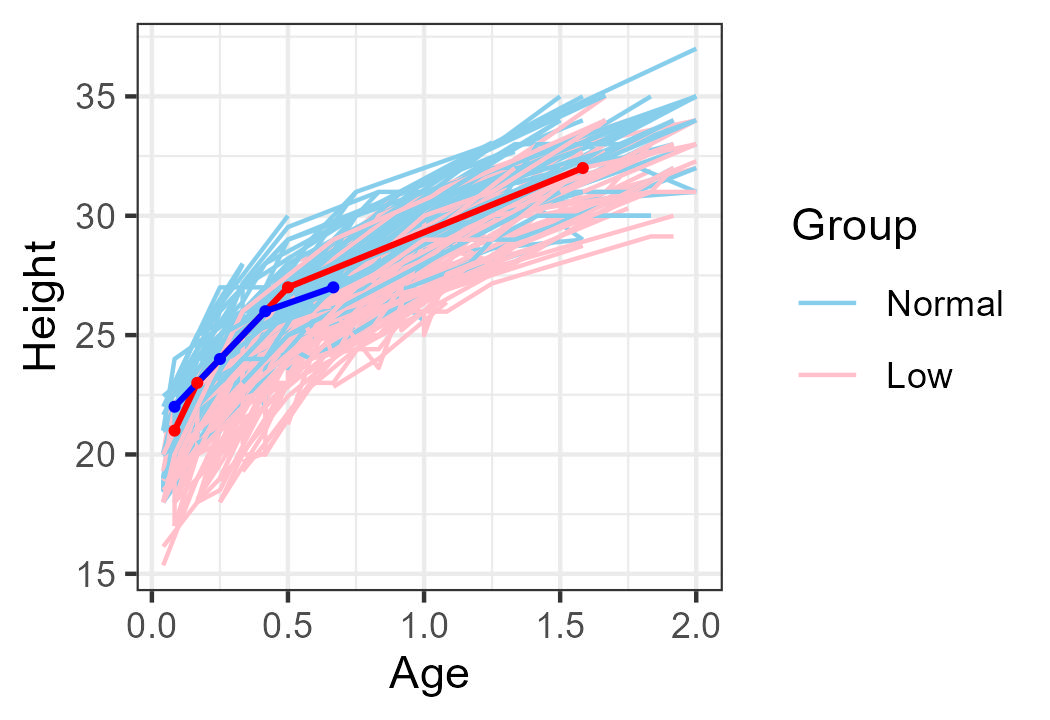}
	\caption{
		Filtered height trajectories from the infant growth data, obtained by retaining subjects whose median observed height lies in $[24,26]$. 
		This filtering reduces the dominant location difference between the normal-birth-weight and low-birth-weight groups, 
		making the comparison more sensitive to shape- and extremeness-related differences. 
		Normal-birth-weight and low-birth-weight infants are shown in blue and red, respectively. 
		The deepest sparse trajectory in each group, computed using ACRHD with $u=0.95$ and $K=6$, is displayed as a thicker and darker curve with observed points.
	}
	\label{fig_rda_growth_curves_subset}
\end{figure}

\begin{table}[b!]
	\centering
	\caption{
		P-values of the CRHD-based KW two-sample tests for the filtered infant growth data. 
		The tests compare the distributions of normal-birth-weight and low-birth-weight growth trajectories 
		after reducing the overall location difference between the two groups. 
		The truncation level and quantile level are fixed at $K=6$ and $u=0.95$, respectively.
	}
	\label{tb_rda}
	\vspace{0.1in}	
	\setstretch{1.3}
	\begin{tabular}{c|cc} \hline
		Methods & ACRHD & PCRHD \\ \hline 
            P-values & $<10^{-5}$ & $1.0\times 10^{-5}$ \\
        \hline 
	\end{tabular}
\end{table}

We apply the KW tests based on the proposed averaged and plug-in CRHDs to this filtered dataset. 
As shown in \autoref{tb_rda}, both tests with $u=0.95$ and $K=6$ reject the null hypothesis at level $0.05$,
providing evidence that the two groups of growth trajectories differ in distribution. 
This result demonstrates the practical value of CRHD-based rankings in a setting where the visual separation between groups is reduced and the difference is likely to involve more complex features than overall magnitude. 
Motivated by the size distortions observed for two-stage rank tests in \autoref{ssec_4_4}, 
we do not report the corresponding results here.

Instead, we compare our findings with the previous analysis of \cite{LQ21}. 
Specifically, we apply their envelope test to the filtered dataset, using the same candidate values $K \in \{2,4,6,8,10\}$ for presmoothing and the extremal depth for envelope construction, as in the original paper. 
For this filtered dataset, however, the envelope tests fail to reject the null hypothesis for all choices of $K$, since the observed median differences are always contained within the corresponding envelopes. 
This contrasts with the results reported by \cite{LQ21} for the original dataset and suggests that the filtered data contain a more subtle form of group difference, not primarily driven by overall magnitude. 
The CRHD-based KW tests are able to detect this difference, illustrating that direct depth evaluation from sparse observations can be beneficial when group differences are expressed through complex shape, variability, or extremeness patterns.

\section{Conclusions and future directions} \label{secConclusion}

This paper makes three main contributions to sparse functional data analysis and the literature on data depth. 
First, we propose the conditional regularized halfspace depth (CRHD), a new depth function designed to evaluate sparsely observed curves directly, without requiring prior reconstruction of the underlying trajectories. 
Second, we show that CRHD induces a useful ranking structure for sparse functional data and can be effectively applied to rank-based hypothesis testing problems. 
Third, our numerical studies demonstrate that direct depth evaluation via CRHD outperforms or remains competitive with existing two-stage approaches, especially by more accurately recovering the RHD-induced rankings of the true underlying trajectories.

As noted in \autoref{remGau}, the proposed depth definition itself does not require Gaussianity; 
the Gaussian working model is used \textit{only} for estimating the conditional distributions involved in its computation.
Accordingly, medians induced by the estimated CRHD may be sensitive to contamination or model misspecification when the Gaussian working model is violated. 
This limitation reflects the intrinsic difficulty of estimating the conditional distribution of a projection $\langle X_0, v \rangle$ given sparse observations $\bm{X}_0$. 
More fundamentally, without suitable distributional assumptions, this conditional distribution may not be identifiable. 
To illustrate this point, let $\mathsf{Rade}(1/2)$ denote the Rademacher distribution, and let $S \sim \mathsf{Rade}(1/2)$ with $\pr(S=1)=\pr(S=-1)=1/2$. 
Consider two processes, $X_0(t)\equiv S$ and $X_1(t)=\mathrm{sign}\{\sin(2\pi(t+U))\}$ for all $t\in[0,1]$, where $U\sim\mathsf{Unif}(0,1)$. 
Then, at any fixed observation time, both $X_0(t)$ and $X_1(t)$ have the same distribution as $S$. 
Thus, under an extremely sparse design with a single observation per curve, the observed data from the two processes have the same marginal distribution. 
However, their projections onto the direction $v(t)\equiv 1$ are different: $\langle X_0,v\rangle=S\sim\mathsf{Rade}(1/2)$, whereas $\langle X_1,v\rangle=0$. 
Indeed, the same observed value, say $X(t_0)=1$, implies $\langle X_0,v\rangle=1$ under the first process but $\langle X_1,v\rangle=0$ under the second. 
Hence, the same sparse observation can correspond to different conditional distributions of the projection, underscoring the need for distributional structure in estimation.

Nevertheless, even in finite-dimensional settings, some useful notions of depth rely on structural features of the underlying distribution, such as its mean or covariance; 
the Mahalanobis depth is a prominent example, and we refer to \cite{MM22} for a related discussion. 
In the same spirit, although the proposed CRHD is model-free at the definitional level, the Gaussian working model provides a practical and tractable route to estimating the conditional distributions required for its computation. 
The resulting procedure is useful for rank-based inference with sparse functional data, particularly because it enables depth evaluation directly from sparse observations without requiring prior trajectory reconstruction. 
We therefore view this work as a foundational step toward depth-based analysis of sparse functional data, offering both a conceptually natural definition and a practically effective implementation. 
More broadly, the preceding discussion points to a new research agenda: developing principled methods for estimating the conditional distribution of a projection given sparse functional observations. 
To the best of our knowledge, this problem has received little attention in the existing literature, despite its fundamental relevance to depth-based inference for sparse functional data. 
While the proposed computational framework may be extended beyond the Gaussian working model, for instance to more general elliptical distributions, we leave these developments for future research.

\section*{Supplementary materials}

The supplement includes the proofs of the theorems, the detailed procedures of the simulation studies, and extra simulation results. 

%
%
%
%
%

%

%
%

%
%
%
%
%
%
\clearpage
\setstretch{1}
\bibliographystyle{dcu}
\bibliography{RHDsparse}

\newpage

\setcounter{page}{0}

\normalsize
\renewcommand{\baselinestretch}{1.0}

\appendix

\begin{center}
	\Large \bf
	Supplement to \\
	``Conditional regularized halfspace depth \\
	for sparse functional data
	and its applications''
\end{center}

\begin{center}
	Hyemin Yeon\footnote{
		Department of Mathematical Sciences, Kent State University, Kent, OH, 44242, USA, Email: hyeon1@kent.edu.
	}, \qquad
	Xiongtao Dai\footnote{
		Division of Biostatistics, University of California, Berkeley, CA, 94720, USA, Email: xiongtao.dai@hotmail.com.
	}, \qquad
	Sara Lopez-Pintado\footnote{
		Department of Health Sciences, Northeastern University, Boston, MA, 02115, USA, Email: s.lopez-pintado@northeastern.edu.
	}
\end{center}

\renewcommand{\thesection}{S\arabic{section}} 
\renewcommand{\thethm}{S\arabic{thm}}
\renewcommand{\thelem}{S\arabic{lem}}
\renewcommand{\theprop}{S\arabic{prop}}
\renewcommand{\thecor}{S\arabic{cor}}
\renewcommand{\therem}{S\arabic{rem}}
\renewcommand{\thefigure}{S\arabic{figure}}
\renewcommand{\thetable}{S\arabic{table}}
\renewcommand{\theequation}{S\arabic{equation}}
\renewcommand{\thealgocf}{S\arabic{algocf}}

\begin{abstract}
	
	This supplement provides 
	(i) all technical details for the theorems of the conditional regularized halfspace depth given in \autoref{sec2} of the main paper (Section~\ref{sec_S1}),
	(ii) detalied simulation procedures for \autoref{sec4} of the main paper (Section~\ref{sec_S2}),  
	and (iii) all extra results regarding the numerical studies  (Section~\ref{sec_S3}).
	
	\vspace{0.1in}
	\noindent
	\textit{Keywords and phrases:} 
	conditional distributions;
	depth-based rank tests;
	functional rankings; 
	sparse functional data analysis.
	
\end{abstract}



\newpage


\section{Technical details} \label{sec_S1}

This section devotes to proving Theorems~\ref{thmNondege}--\ref{thmProperties} given in the main paper. 
Although some arguments overlap with those in \cite{YDL25RHD}, 
which we will abbreviate as [YDL25] in what follows,
the present sparse-data setting requires a more delicate treatment because of the conditional distributions appearing in the definition of CRHD. 
We therefore omit repetitive details and focus instead on the key distinctions from the dense-data setting considered therein.

\begin{proof}[Proof of \autoref{thmNondege}]
	We follow the proof of [YDL25,~Theorem~1]
	but need some modification due to conditional probability.
	Let $\mu = \eo[X]$ denote the mean of the random function $X$. 
	Since the Cauchy-Schwarz inequality implies
	\begin{align*}
		1 = \|v\| \leq \|\ga^{1/2}v\| \|\ga^{-1/2}v\| \leq \ld \|\ga^{1/2}v\|, \quad \forall v \in \vv_\ld,
	\end{align*}
	it holds that
	\begin{align*}
		|\langle x-\mu,v \rangle| \leq \|x-\mu\|\|v\| \leq \ld \|x-\mu\| \|\ga^{1/2}v\|, \quad \forall x \in \HH, \forall v \in \vv_\ld.
	\end{align*}
	Then, by the property of the cumulative distribution function, we have that
	\begin{align*}
		\inf_{v \in \vv_\ld} \pr ( \langle X, v \rangle \geq \langle X_0, v \rangle | X_0)
		& = \inf_{v \in \vv_\ld} \pr \left( {\langle X-\mu,v \rangle \over \|\ga^{1/2} v\|}  \geq {\langle X_0-\mu, v \rangle \over \|\ga^{1/2}v \|} \Big| X_0 \right)
		\\& \geq \inf_{v \in \vv_\ld} \pr \left( {\langle X-\mu,v \rangle \over \|\ga^{1/2} v\|}  \geq \ld \|X_0-\mu\| \Big| X_0 \right)
		\\& = \inf_{v \in \vv_\ld} (1-F_v(\ld\|X_0-\mu\|))
		\\& = 1- \sup_{v \in \vv_\ld}F_v(\ld\|X_0-\mu\|) >0,
	\end{align*}
	where $F_v$ denotes the cumulative distribution function of the standardized projection $\langle X-\mu,v \rangle / \|\ga^{1/2} v\|$ of $X$ onto $v$.
	By the condition in Equation~\eqref{condNondege} of the main paper, 
	it almost surely holds that
	\begin{align*}
		D_\ld(\bm{\xx}_0) 
		& = \inf_{v \in \vv_\ld} \pr ( \langle X, v \rangle \geq \langle X_0, v \rangle | \bm{\xx}_0)
		= \inf_{v \in \vv_\ld} \eo \left[ \pr(\langle X,v \rangle \geq \langle X_{0,\ap}, v \rangle | X_0) \Big| \bm{\xx}_0 \right]
		\\& \geq \eo \left[ \inf_{v \in \vv_\ld} \pr ( \langle X, v \rangle \geq \langle X_0, v \rangle | X_0) \Big| \bm{\xx}_0 \right]
		\\& \geq \eo \left[ 1- \sup_{v \in \vv_\ld}F_v(\ld\|X_0-\mu\|) \Big| \bm{\xx}_0 \right]  > 0.
	\end{align*}
\end{proof}

\begin{proof}[Proof of \autoref{thmProperties}]
	
	Let $H_{x,v} = \{ y \in \HH: \langle y, v \rangle \geq \langle x, v \rangle \}$ denotes the closed halfspace in $\HH$ attached to $x \in \HH$ with normal vector $v \in \HH$. 
	but upon modification to incorporate subtlety due to conditional probability.

	\begin{enumerate}[(a)]
		\item 
		
		%
		
		In page~11 in the supplement of [YDL25],
		the following two sets $E_1$ and $E_2$ are shown to be equal:
		\begin{align*}
			E_1 & \equiv \{ u \in \HH: \|u\|=1, \|\ga^{-1/2} u\| \leq \ld \}, \\
			E_2 & \equiv \{ A^\top v \in \HH: v \in \HH, \|v\|=1, \|\ga^{-1/2} (A^\top v)\| \leq \ld \},
		\end{align*}
		where $A^\top$ denotes the adjoint operator of $A$. 
		This implies that
		\begin{align*}
			D_\ld(\bm{\yy}_0; P_{AX+b})
			& = \inf \{ \pr( \langle AX+b, v \rangle \geq \langle Y_0, v \rangle |\bm{\yy}_0 ): v \in \HH, \|v\|=1, \|(A \ga A^\top)^{-1/2} v\| \leq \ld  \}
			\\& = \inf \{ \pr( \langle AX - Ax, v \rangle \geq 0|\bm{\xx}_0): v \in \HH, \|v\|=1, \|(A \ga A^\top)^{-1/2} v\| \leq \ld  \}
			\\& = \inf \{ \pr( \langle X - x, A^\top v \rangle \geq 0|\bm{\xx}_0): v \in \HH, \|v\|=1, \|\ga^{-1/2} (A^\top v)\| \leq \ld  \}
			\\& = \inf \{ \pr( \langle X - x, u \rangle \geq 0|\bm{\xx}_0): u \in \HH, \|u\|=1, \|\ga^{-1/2}u\| \leq \ld  \}
			\\& = D_\ld(\bm{\xx}_0; P_X)
		\end{align*}
		almost surely.
		
		\item 
		
		Let $\bm{\xx}_0$ be a sparse functional datum with true function $X_0$.
		If $X_0 = \theta$, then for all $v \in \HH$ with $\|v\|=1$, we have
		\begin{align*}
			\pr( \langle X , v \rangle \geq \langle X_0, v \rangle|X_0)
			= \pr( \langle X , v \rangle \geq \langle \theta, v \rangle)
			= 1/2,
		\end{align*}
		and hence, 
		\begin{align*}
			D_\ld(\bm{\xx}_0)
			& = \inf_{v \in \vv_\ld} \eo \left[ \pr( \langle X , v \rangle \geq \langle X_0, v \rangle|X_0) \Big|X_0 \right]
			= \inf_{v \in \vv_\ld} \eo[(1/2)|X_0]
			= 1/2
			\\& \leq D^{dense}_\ld(\theta).
		\end{align*}
		If $X_0 \neq \theta$, 
		then similarly to the argument in page~11 in the supplement of [YDL25],
		we can find $v \in \vv_\ld$ such that 
		$\theta \notin H_{X_0,v}$, $H_{\theta,-v} \subseteq H_{X_0,v}^c$,
		and
		\begin{align*}
			\pr(X \in H_{X_0,v}|\bm{\xx}_0) 
			\leq 1-\pr(X \in H_{\theta,-v}|\bm{\xx}_0) = 1-\pr(X \in H_{\theta,-v}) \leq 1/2,
		\end{align*}
		which implies that
		\begin{align*}
			D_\ld(\bm{\xx}_0)
			= \inf_{v \in \vv_\ld} \pr( \langle X , v \rangle \geq \langle X_0, v \rangle|\bm{\xx}_0) 
			\leq 1/2 
			\leq D^{dense}_\ld(\theta).
		\end{align*}
		Therefore, we conclude that $D_\ld(\bm{\xx}_0) \leq D^{dense}_\ld(\theta)$
		almost surely.

		\item 
		Without loss of generality, we may assume that $\ap \in (0,1)$.
		Let $\bm{\xx}_0$ be a sparse functional datum with true function $X_0$ and write $X_{0,\ap} = \theta + \ap(X_0 - \theta)$ and $\bm{X}_{0, \ap} = \bm{\theta} + \ap(\bm{\xx}_0 - \bm{\theta})$. 
		If $X_0 = \theta$, then $X_{0,\ap} = \theta = X_0$ and $\bm{X}_{0, \ap} = \bm{\theta} = \bm{\xx}_0$,
		meaning that $D_\ld(\bm{\xx}_0) = D_\ld(\bm{\xx}_{0,\ap})$.
		If $X_0 \neq \theta$, 
		then by Equation~(S8) in page~12 of the supplement of [YDL25], 
		for $v \in \vv_\ld$ with $\theta \notin H_{X_0,v}$, 
		we have $X_0 \in H_{\theta,v}$ and $H_{X_0,v} \subseteq H_{X_{0,\ap},v} \subseteq H_{\theta,v}$.
		This yields
		\begin{align*}
			\pr( \langle X, v \rangle \geq \langle X_{0,\ap}, v \rangle | \bm{\xx}_{0,\ap})
			= \pr( \langle X, v \rangle \geq \langle X_{0,\ap}, v \rangle | \bm{\xx}_0)
			\geq \pr( \langle X, v \rangle \geq \langle X_0, v \rangle | \bm{\xx}_0).
		\end{align*}
		That is, if $X_0 \neq \theta$, we have $D_\ld(\bm{\xx}_0) \leq D_\ld(\bm{\xx}_{0,\ap}).$
		Combining these two, we conclude that $D_\ld(\bm{\xx}_0) \leq D_\ld(\bm{\xx}_{0,\ap})$ almost surely.
		
		\item 
		
		%
		
		Following the proof of [YDL25,~Theorem~3(d)],
		we may assume that $\W \cap \vv_\ld \neq \emptyset$ with $\dim \W = L$ and define an isometry map $\ttt:\W \to \HH_L \equiv \mathrm{span}(\{\phi_l\}_{l=1}^L) $ as $\ttt \psi_l \equiv \phi_l$ for each $l=1,\dots,L$.
		Let $Y = \ttt \Pi_\W X$ and $Y_{0n} = \ttt \Pi_\W X_{0n}$		
		so that $\|Y_{0n}\| = \|\ttt \Pi_\W X_{0n}\| = \| \Pi_\W X_{0n}\| \to \infty$ as $n\to\infty$ almost surely. 
		By using Lemma~S3 in the supplement of [YDL25] 
		and following the same argument as the proof of Theorem~3(d) therein,
		we can find that $U_n$ with $\|U_n\|=1$ such that $\langle Y_{0n}, U_n \rangle \to \infty$ as $n\to\infty$ almost surely
		and that the following inequality holds:
		\begin{align*}
			D_\ld(\xx_{0n}) \leq \pr(\|Y\| \geq \langle Y_{0n}, U_n \rangle| \xx_{0n} ).
		\end{align*}
		The convergence to zero of the upper bound in the preceding display follows from the tightness of $Y$ \cite[cf.][Theorem~1.3]{bill99} and bounded convergence theorem for conditional expectation
		\cite[e.g.,][Theorem~12.2.3]{AL06}.
	\end{enumerate}
\end{proof}


\section{Detailed simulation procedures} \label{sec_S2}

In this section, upon introducing common setups in \autoref{ssec_S2_1}, we provide detailed procedures of each simulation study in Sections~\ref{ssec_S2_2}--\ref{ssec_S2_4}.

\subsection{Setups and notations} \label{ssec_S2_1}

While the number $K$ of functional principal components (FPCs) used in estimation is suppressed in the main paper,
we explicitly write $K$ for more accurate description of implementation.
Also, in the numerical studies, since we control the quantile level $u$ that determines the regularization $\ld$, instead of directly manipulating $\ld$ itself, we denote the RHDs with $u$ rather than $\ld$. 

The two-stage approaches under our comparison employed 
the principal analysis by conditional expectation (PACE) technique \citep{YMW05a, YMW05b, LH10, ZW16,DMT18,GDM21}
to extract the FPC scores and to obtain the smoothed curves. 
These two reconstructed data were computed using the \texttt{fdapace} package \citep{fdapace}.

\newcommand{\Xscr}{\mathscr{X}}

Suppose that the sparse functional data $\Xscr_n \equiv \{\bm{\xx}_i = (\bm{T}_i, \bm{X}_i)\}_{i=1}^n$ are observed following \eqref{eqObsData}, 
with the vectors $\bm{T}_i = [T_{i1}, \dots, T_{in_i}]^\top$ and $\bm{X}_i = [\tilde{X}_{i1}, \dots, \tilde{X}_{in_i}]^\top$ of measurement times and the observe data,
where the underlying curves $\X_n \equiv \{X_i\}_{i=1}^n$ are independently and identically generated by the $K^*$-truncated Karhunen--Lo\`{e}ve expansion in  \eqref{eqKLtrunc} with $K^* = 15$. 
We describe the two-stage approaches under comparison in more details.
\begin{enumerate}[(a)]
	\item (TwoStageTHD, applying multivariate depth to the first FPC scores)
	Let $\bm{\xi}_{iK} \equiv (\xi_{i1}, \dots, \xi_{iK})^\top$ denote the vector of the first $K$ FPC scores of the $i$-th sparse observation $\bm{\xx}_i$ computed by the PACE. 
	We can construct the finite-dimensional halfspace depth \citep{tukey75} based on $\Xi_{nK} \equiv \{\bm{\xi}_{iK}\}_{i=1}^n$, denoted as $D_K(\cdot; \Xi_{nK})$.
	This is approximated by random projections \citep{CN08} and implemented in \texttt{ddalpha} package, denoted by $\tilde{D}_{K,L}(\cdot; \Xi_{nK})$, 
	although the approximations may not be sufficient for large $K$, e.g., $K \geq 4$. 
	For a sparse observation $\bm{\xx}_0$, denoting its first $K$ FPC scores as $\bm{\xi}_{0K}$, 
	the (approximated sample) truncated halfspace depth at $\bm{\xx}_0$ is defined as $\tilde{D}_{\mathrm{Tukey}, \infty, K, L}(\bm{\xx}_0; \Xscr_n) \equiv \tilde{D}_{K,L}(\bm{\xi}_{0K}; \Xi_{nK})$. 
	
	\item (TwoStageRHD, applying dense functional depth to the predicted curves)
	Let $\hat{\X}_n \equiv \{\hat{X}_i\}_{i=1}^n$ denote the set of the predicted curves using the PACE, where the number of FPCs for the prediction is chosen as the possible maximum by putting  \texttt{optns = list(maxK = min(n,49))} in \texttt{fdapace::FPCA()} function.
	We can construct the original RHD $x \mapsto \tilde{D}_{u, K L}^{\mathrm{dense}}(x; \hat{\X}_n)$  based on $\hat{\X}_n$.
	For a sparse observation $\bm{\xx}_0$, denoting its predicted curve as $\hat{X}_0$, 
	the (approximated sample) smoothed RHD at $\bm{\xx}_0$ is defined as $\tilde{D}_{\mathrm{pred}, u, K, L}(\bm{\xx}_0; \Xscr_n) \equiv \tilde{D}_{u, K L}^{\mathrm{dense}}(\hat{X}_0; \hat{\X}_n)$. 
\end{enumerate}
The averaged and plug-in CRHDs in the main paper are respectively denoted as $\tilde{D}_{\mathrm{ave}, u, K, L}(\cdot; \Xscr_n)$ and $\tilde{D}_{\mathrm{plug}, u, K, L}(\cdot; \Xscr_n)$; 
the PACE were used again to estimate the mean and covariance functions of $X$.
The final set of sparse functional depths under comparison is 
\begin{align} \label{eq_depth_final}
	\tilde{D}_{\mathrm{ave}, u, K, L}(\cdot; \Xscr_n), \quad
	\tilde{D}_{\mathrm{plug}, u, K, L}(\cdot; \Xscr_n), \quad
	\tilde{D}_{\mathrm{pred}, u, K, L}(\cdot; \Xscr_n), \quad
	\tilde{D}_{\mathrm{Tukey}, \infty, K, L}(\cdot; \Xscr_n).
\end{align}
As common tuning parameters are $K \in \{2,4,6,8,10\}$ and $L = 1000K$, we write $\tilde{D}_{q, u, K, L}(\cdot; \Xscr_n)$ to denote each sparse functional depth, where $q$ varies in $Q \equiv \{\mathrm{ave}, \mathrm{plug}, \mathrm{pred}, \mathrm{Tukey} \}$. When $q \neq \mathrm{Tukey}$, the same quantile level $u \in \{ 0.4, 0.6, 0.8, 0.95 \}$ will be equally applied to each RHDs, while $q = \mathrm{Tukey}$ automatically indicates $u = \infty$. 
We apply the following procedure to each halfspace depth with the same choices of $u, K,L$ as described in the main text.

\subsection{Ranking recovery} \label{ssec_S2_2}

The simulation procedure of ranking recovery resembles the one studied by \citet[Section~4]{SL21}.
In a nutshell, we compared correlations between rankings from the (dense) RHD values of the underlying true functions and each of the RHDs for sparse observations. 
For each $m=1,\dots,M$, do the following. 
\begin{enumerate}
	\item 
	Construct the approximated sample RHD functions $x \mapsto \tilde{D}_{u, K, L}^{\mathrm{dense}}(x; \X_n)$ based on the true curves $\X_n$, as described in [YDL25,~Algorithm~1],
	and compute the RHD values for each $X_i$, i.e., $\{d_{\mathrm{true}, i} \equiv \tilde{D}_{u, K, L}^{\mathrm{dense}}(X_i; \X_n)\}_{i=1}^n$ 
	
	\item
	
	Construct the first three sparse functional depths in \eqref{eq_depth_final}
	and compute the RHD values $\{ d_{q,i} \equiv \tilde{D}_{q, u, K, L}(\bm{\xx}_i; \Xscr_n) \}_{i=1}^n$ for each $q \in \{\mathrm{ave}, \mathrm{plug}, \mathrm{pred}\}$.

	\item 
	For each $q \in \{\mathrm{true}, \mathrm{ave}, \mathrm{plug}, \mathrm{pred}\}$, compute the rankings $\{r_{q,i}\}_{i=1}^n$ of depth values,
	where $r_{q,i}$ denotes the ranking of $d_{q,i}$ and a lower ranking means a lower depth value;
	in case of ties, we take the minimum. 
	
	\item 
	Compute the empirical correlation between $\{r_{\mathrm{true}, i}\}_{i=1}^n$ and $\{r_{q,i}\}_{i=1}^n$ (i.e., the Spearman correlation between depth values $\{d_{\mathrm{true}, i}\}_{i=1}^n$ and $\{d_{q,i}\}_{i=1}^n$) for each $q \in  \{\mathrm{pred}, \mathrm{ave}, \mathrm{plug}\}$ by using \texttt{stats::cor()} in R.
	The resulting correlation is respectively written by $C_{q, m}$ for each $q \in  \{\mathrm{ave}, \mathrm{plug}, \mathrm{pred}\}$.

\end{enumerate}
For each $q \in  \{\mathrm{ave}, \mathrm{plug}, \mathrm{pred}\}$, we compute the average $\cc_q \equiv M^{-1} \sum_{m=1}^M C_{q, m}$ of these correlations $\{C_{q, m}\}_{m=1}^M$ as a measure of ranking recovery.
In \autoref{figSRHDrankcorr} of the  main paper, ACRHD, PCRHD, and TwoStageRHD respectively represent the values of $\cc_{\mathrm{ave}}, \cc_{\mathrm{plug}}, \cc_{\mathrm{pred}}$.

\subsection{Rank tests} \label{ssec_S2_4}

The depth-based rank test we conducted follows a modified Kruskal--Wallis test by \cite{CS12}, which is improved from the original one by \cite{LS93, ZH06}.
We write $\chi^2_{1-\ap}(1)$ to denote the $1-\ap$ quantile of the chi-square distribution with degree of freedom 1.
For each $m=1,\dots,M$, perform the following. 
\begin{enumerate}
	\item 
	Generate the two samples $\Xscr_{0n} \equiv \{\xx_{0i}\}_{i=1}^{n_0}$ and $\Xscr_{1n} \equiv \{\xx_{1i}\}_{i=1}^{n_1}$ of sparse observations, as described in \autoref{ssec_4_4} of the main paper.

	\item 
	For each $q \in Q$, construct the depth $\tilde{D}_{q, u, K, L|g} \equiv \tilde{D}_{q, u, K, L}(\cdot; \Xscr_{gn})$ based on the sample $\Xscr_{gn} \equiv \{\xx_{gi}\}_{i=1}^{n_g}$, where $g \in \{0,1\}$ denotes the group,
	and compute the depth values $d_{q,g,i|0} =\tilde{D}_{q, u, K, L|0}(\bm{\xx}_{gi}) $ and $d_{q,g,i|1} =\tilde{D}_{q, u, K, L|1}(\bm{\xx}_{gi})$ at all observations using each depth.
	
	\item 
	Compute the rankings $r_{q,g,i|0}$ of $d_{q,g,i|0}$ in $\bigcup_{g=0}^1 \{d_{q,g,i|0}\}_{i=1}^{n_0}$ based on the first group $\Xscr_{0n}$, and the rankings $r_{q,g,i|1}$ of $d_{q,g,i|1}$ in $\bigcup_{g=0}^1 \{d_{q,g,i|1}\}_{i=1}^{n_1}$ based on the second group $\Xscr_{1n}$; a lower ranking means a lower depth.
	
	\item 
	Compute the classical Kruskal--Wallis test statsitics \citep{Krusk52, KW52} using \texttt{stats::kruskal.test()} in R, based on the depth rankings $\bigcup_{g=0}^1 \{r_{q,g,i|0}\}_{i=1}^{n_0}$ and $\bigcup_{g=0}^1 \{r_{q,g,i|1}\}_{i=1}^{n_1}$, say $H_{q,0}$ and $H_{q,1}$. 
	
	\item 
	The final statistic is computed by their average as $H_q \equiv (H_{q,0} + H_{q,1}) / 2$,
	and check if $H_q$ is in the rejection region $(\chi^2_{1-\ap}(1), \infty)$ by computing the indicator	$I_{q,m}\equiv \I(H_q > \chi^2_{1-\ap}(1))$. 
	
\end{enumerate}
For each $q \in Q$, we compute the average of the indicators $\{I_{q,m}\}_{m=1}^\infty$ to approximate the rejection rate of the corresponding rank test: 
\begin{align*}
	\widehat{RR}_q \equiv M^{-1} \sum_{m=1}^M  I_{q,m}. 
\end{align*}


\section{Additional simulation results} \label{sec_S3}

All extra figures regarding simulation studies are collected in this section.

\begin{sidewaysfigure}
	\centering
	\includegraphics[width=0.95\linewidth]{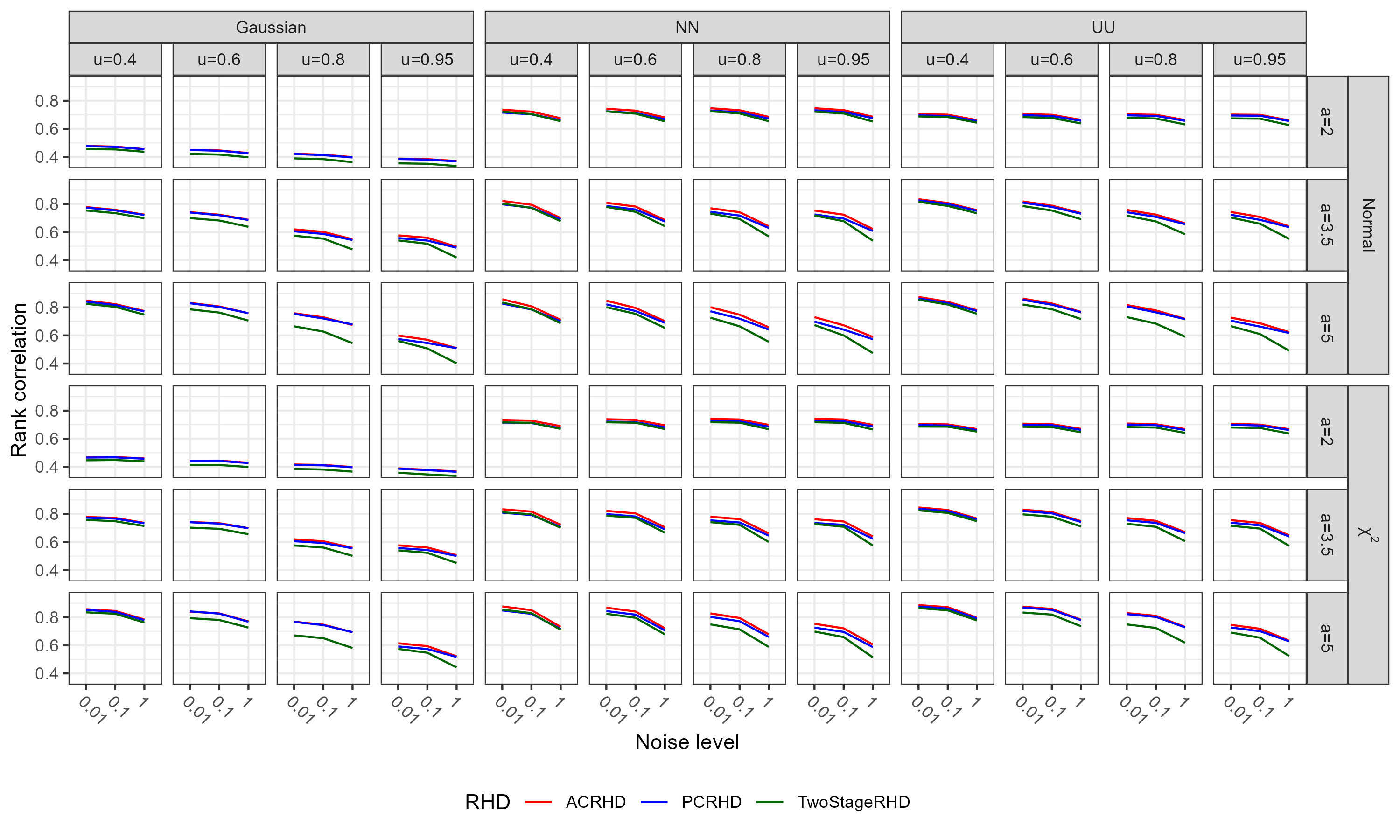}
	\caption{
			Empirical rank correlations of ACRHD, PCRHD, and TwoStageRHD values at sparse observations with the original RHD values at the true underlying functions
			when the sample size is $n=50$ and the RHDs are constructed with $K=4$.
		}
	\label{fig_recover1_n50}
\end{sidewaysfigure}

\begin{sidewaysfigure}
	\centering
	\includegraphics[width=0.95\linewidth]{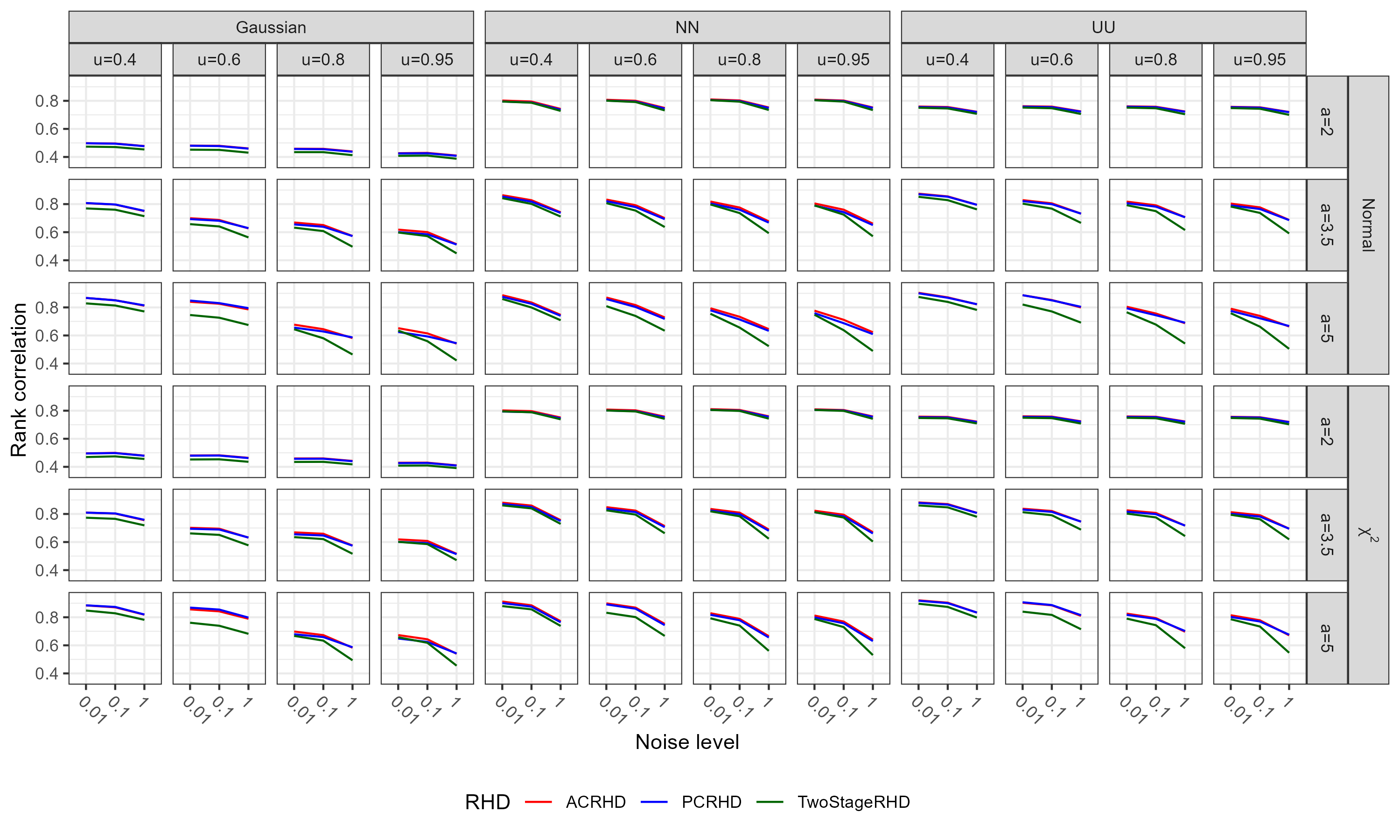}
	\caption{
			Empirical rank correlations of ACRHD, PCRHD, and TwoStageRHD values at sparse observations with the original RHD values at the true underlying functions
			when the sample size is $n=200$ and the RHDs are constructed with $K=6$.
		}
	\label{fig_recover2_n200}
\end{sidewaysfigure}

\begin{sidewaysfigure}
	\centering
	\includegraphics[width=0.95\linewidth]{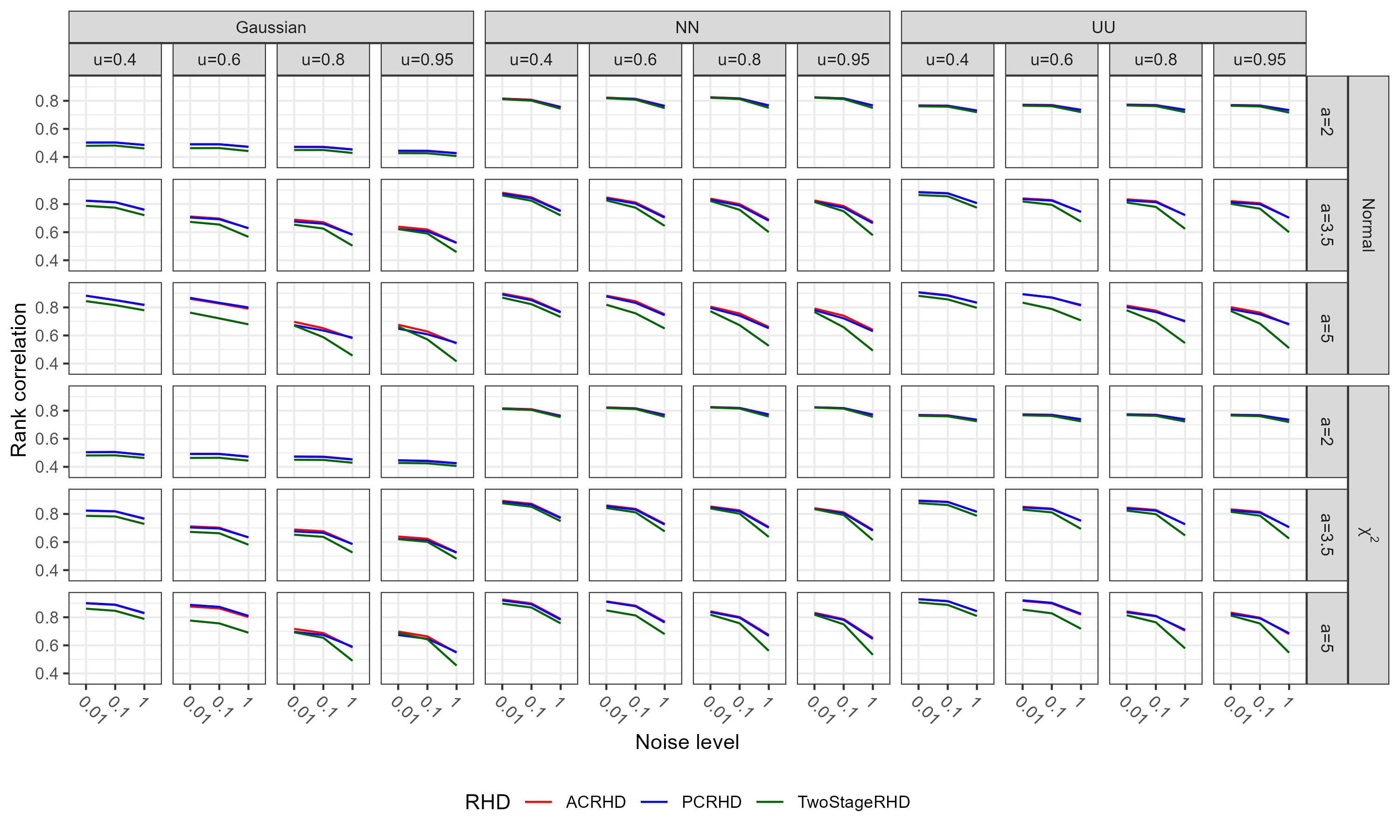}
	\caption{
			Empirical rank correlations of ACRHD, PCRHD, and TwoStageRHD values at sparse observations with the original RHD values at the true underlying functions
			when the sample size is $n=400$ and the RHDs are constructed with $K=6$.
		}
	\label{fig_recover3_n400}
\end{sidewaysfigure}


%
%

\begin{sidewaysfigure}
	\centering
	\includegraphics[width=0.85\linewidth]{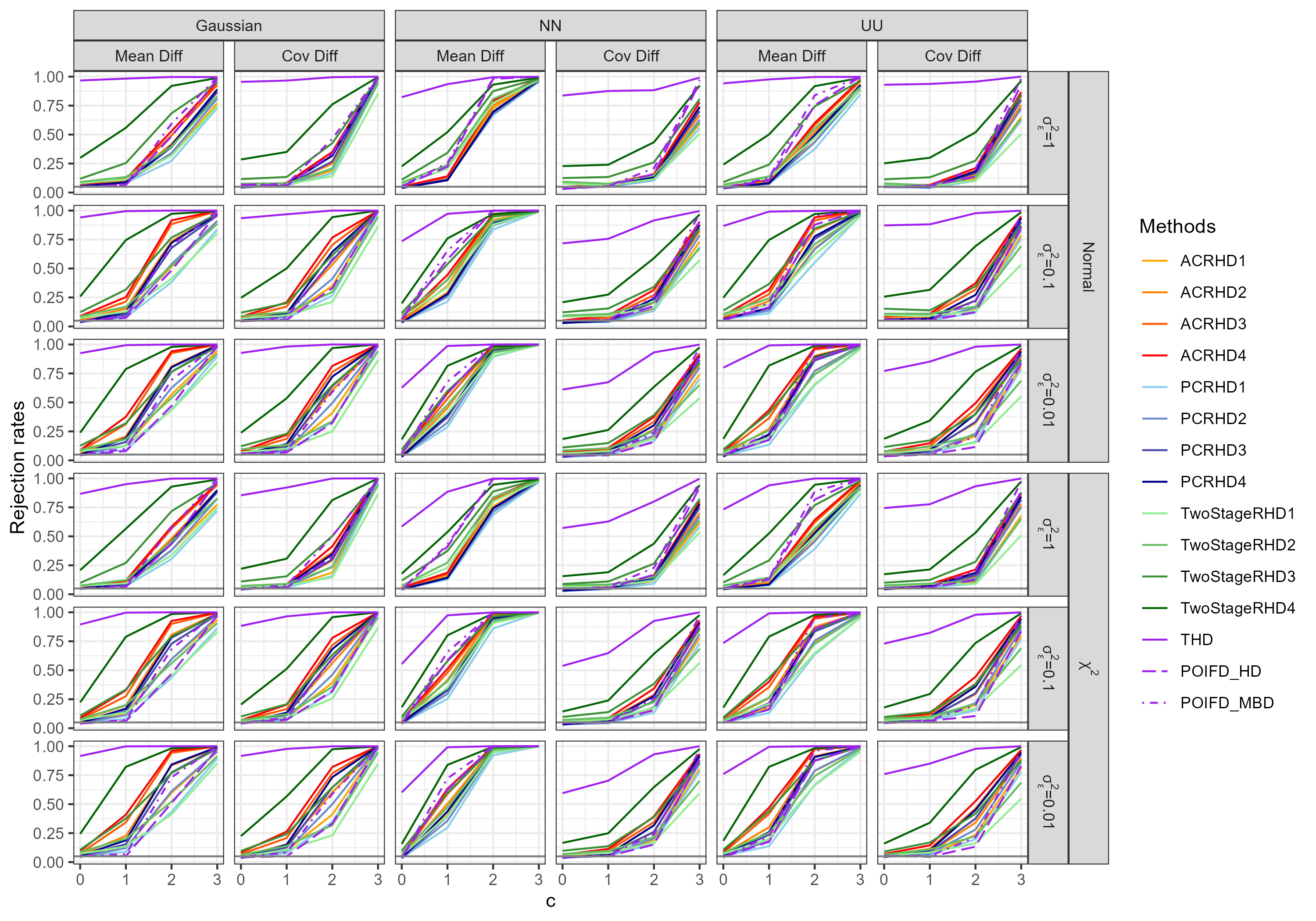}
	\caption{
			Empirical rejection rates  of the KW tests based on both empirical and plug-in SRHDs  as the degree $c \in \{0, 1, 2, 3\}$ of the alternative increases; the gray horizontal line represents the nominal size 5\%.
			The total sample size is $n=100$ and all depths are computed by taking $K=6$. 
			From RHD1 to RHD4, the quantile levels $u \in \{0.4, 0.6, 0.8, 0.95\}$ are used for the RHDs.
		}
	\label{fig_rank1_n100}
\end{sidewaysfigure}

\begin{sidewaysfigure}
	\centering
	\includegraphics[width=0.85\linewidth]{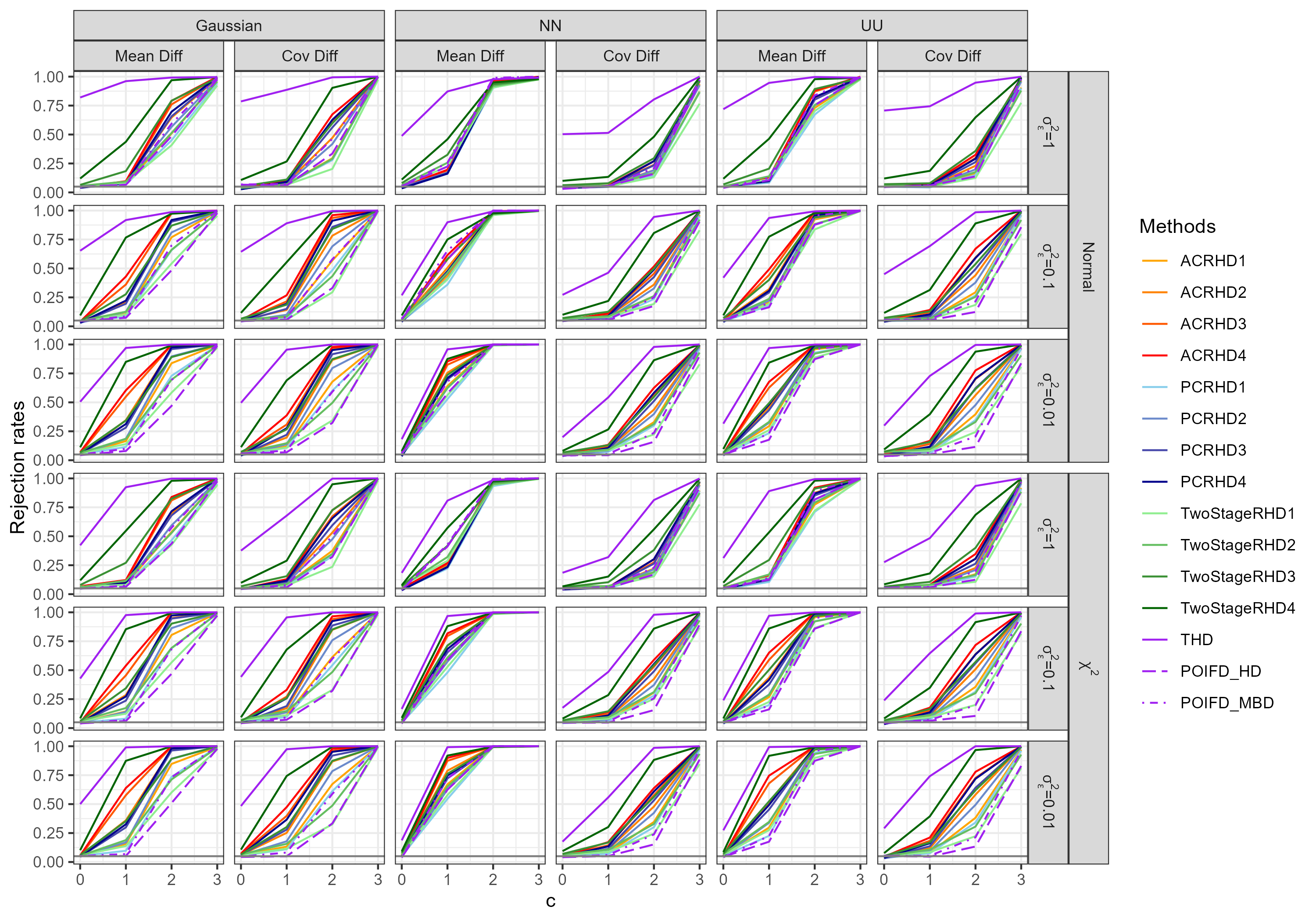}
	\caption{
			Empirical rejection rates  of the KW tests based on both empirical and plug-in SRHDs  as the degree $c \in \{0, 1, 2, 3\}$ of the alternative increases; the gray horizontal line represents the nominal size 5\%.
			The total sample size is $n=200$ and all depths are computed by taking $K=6$. 
			From RHD1 to RHD4, the quantile levels $u \in \{0.4, 0.6, 0.8, 0.95\}$ are used for the RHDs.
		}
	\label{fig_rank2_n200}
\end{sidewaysfigure}

\end{document}